\documentclass{elsarticle}
\makeatletter
\makeatletter
\def\doi#1{\gdef\@doi{#1}}\def\@doi{}

\usepackage[utf8]{inputenc}
\usepackage{stmaryrd}
\usepackage{amsmath, amssymb, mathrsfs}
\usepackage{hyperref}
\usepackage{comment}

\usepackage[ruled, vlined]{algorithm2e}
\usepackage{algorithmic}

\usepackage[english]{babel}

\begin{document}

\renewcommand{\floatpagefraction}{.9}

\newtheorem{theo}{Theorem}[section]
\newtheorem{lem}[theo]{Lemma}
\newtheorem{prop}[theo]{Proposition}
\newtheorem{cor}[theo]{Corollary}
\newtheorem{conj}{Conjecture}
\newtheorem{quest}[theo]{Question}
\newtheorem{rem}[theo]{Remark}
\newtheorem{ex}[theo]{Example}
\newdefinition{deftn}{Definition}
\newdefinition{rmk}{Remark}
\newproof{proof}{Proof}

\title{Matrix-F5 algorithms over finite-precision complete discrete valuation fields}

\author[tv]{Tristan Vaccon}
\ead{tristan.vaccon@univ-rennes1.fr}
\address[tv]{Université de Rennes 1}

\begin{abstract}
Let $(f_1,\dots, f_s) \in \mathbb{Q}_p [X_1,\dots, X_n]^s$ be a sequence of homogeneous polynomials with $p$-adic coefficients. Such system may happen, for example, in arithmetic geometry. Yet, since $\mathbb{Q}_p$ is not an effective field, classical algorithm does not apply.

We provide a definition for an approximate Gröbner basis with respect to a monomial order $w.$ We design a strategy to compute such a basis, when precision is enough and under the assumption that the input sequence is regular and the ideals $\langle f_1,\dots,f_i \rangle$ are weakly-$w$-ideals. The conjecture of Moreno-Socias states that for the grevlex ordering, such sequences are generic.

Two variants of that strategy are available, depending on whether one lean more on precision or time-complexity. For the analysis of these algorithms, we study the loss of precision of the Gauss row-echelon algorithm, and apply it to an adapted Matrix-F5 algorithm. Numerical examples are provided.

Moreover, the fact that under such hypotheses, Gröbner bases can be computed stably has many applications. Firstly, the mapping sending $(f_1,\dots,f_s)$ to the reduced Gröbner basis of the ideal they span is differentiable, and its differential can be given explicitly. Secondly, these hypotheses allows to perform lifting on the Grobner bases, from $\mathbb{Z}/p^k \mathbb{Z}$ to $\mathbb{Z}/p^{k+k'} \mathbb{Z}$ or $\mathbb{Z}.$ 

Finally, asking for the same hypotheses on the highest-degree homogeneous components of the entry polynomials allows to extend our strategy to the affine case.

\end{abstract}

\begin{keyword}
F5 algorithm, Gröbner bases, Moreno-Socias conjecture, $p$-adic algorithm, $p$-adic precision, differential precision
\end{keyword}

\maketitle

\section{Introduction}

Ideal study and polynomial system solving are crucial problem in computer algebra, with numerous applications, either theoretical (as in algebraic geometry) or in applied mathematics (as in cryptography). To that intent, Gröbner bases computation is a decisive tool.

A classical strategy to tackle a problem over $\mathbb{Q}$ consists in reducing it modulo many different primes and then recombine the solutions. In that case, one can choose freely the primes and discard those that lead to inefficient or irrelevant computations. This method applies also to Gröbner bases and leads to the notion of "lucky" primes.
Nevertheless, the advent of arithmetic geometry has seen the emergence of questions that are purely local (\textit{i.e.} where the prime $p$ is fixed at the very beginning and one can not vary it). As an example, one can cite the recent work of Caruso and Lubicz \cite{Caruso:2014} who gave an algorithm to compute lattices in some $p$-adic Galois representations. A related question is the study of $p$-adic deformation spaces of Galois representations. Since the work of Taylor and Wiles \cite{Taylor:1995}, we know that these spaces play a crucial role in many questions in number theory. Being able to compute such spaces appears then as an interesting question of algorithmics and require the use of purely $p$-adic Gröbner bases. Yet, no practical survey of Gröbner bases over $p$-adic fields are actually available. This motivates our study.

In this document, we present a way to deal with Gröbner bases for ideals of $\mathbb{Q}_p[X_1, \dots, X_n]$ and  $\mathbb{F}_q((t))[X_1, \dots, X_n]$ with a strong assumption on their structure that assure numerical stability.
In that case, we provide a matrix-F5 algorithm to compute an approximate Gröbner bases of such an ideal, while being able to certify the leading monomials of the ideal. This also  proves the differentiability of the computation of reduced Gröbner bases under these assumption, and enables new lifting methods.

\bigskip

\noindent \textbf{Related works.}

In the last few decades, the need for approximate Gröbner bases for computation over floating-point numbers has risen many studies. Sasaki and Kako provide in \cite{Sasaki:2007} \cite{Sasaki:2010} a wonderful introduction to this topic, by classifying the cancellation that might happen when handling floating-point number.
Shirayanagi \& Sweedler \cite{Shirayanagi:1998}, Kondratyev, Stetter \& Winkler \cite{Kondratyev:2004}, Nagasaka \cite{Nagasaka:2009}, Stetter \cite{Stetter:2005}, Traverso \& Zanoni \cite{Traverso:2002}, Faugère \& Liang \cite{Faugere:2010} and many more have contributed to this topic. Yet, their point of view was always that of floating-point, whose behavior is not identical to that of $\mathbb{Q}_p$ or formal series.

Meanwhile, a $p$-adic approach to Gröbner bases over $\mathbb{Q}$ has been studied by Winkler \cite{Winkler:1988}, Pauer \cite{Pauer:1992}, Gräbe \cite{Grabe:1993}, Arnold \cite{Arnold:2003}, and Renault and Yokoyama \cite{Renault:2006}. Yet, their works all have in common that they are interested in a global problem over $\mathbb{Q}$ or $\mathbb{Z}$ and they choose one or many $p$'s adapted to their problem. Hence, their ideas can not directly apply to the handling of a problem over $\mathbb{Q}_p$ or $\mathbb{F}_p((t))$ where $p$ can not vary.

A shorter version of this article has been published in the Proceedings of the 39th International Symposium on Symbolic and Algebraic Computation (ISSAC 2014). It contains Section 2 and 3 and some material of Section 4 and 5.

\bigskip

\noindent \textbf{Main results.}
For $K= \mathbb{Q}_p$  or $\mathbb{F}_p((t)),$ and $R=\mathbb{Z}_p$ or $\mathbb{F}_p[[t]]$ respectively, polynomials in $K[X_1,\dots, X_n]$ or $R[X_1,\dots, X_n]$  can only be handled with finite precision over the coefficients. Let $w$ be a monomial ordering and let $f=(f_1, \dots, f_s) \in R^s$ be homogeneous polynomials satisfying the two structure hypotheses: \begin{itemize}
\item \textbf{H1}: $(f_1,\dots,f_s)$ is a regular sequence.
\item \textbf{H2}: the $\left\langle f_1, \dots, f_i \right\rangle$ are weakly-$w$-ideals (see Definition \ref{weak w}).
\end{itemize}
These hypotheses ensure some regularity property: in a neighborhood of $f$ satisfying \textbf{H1} and \textbf{H2}, the application mapping a sequence to its reduced Gröbner basis is differentiable (and continuous), and we compute the differential explicitly in Theorem \ref{Diff}. Hence, around such an $f,$ one can safely work with approximations. On the opposite, if \textbf{H1} or \textbf{H2} is relaxed, the continuity is no longer guaranteed (see  Section \ref{Continuity}), which means that the computation may not be achieved with approximated inputs. More precisely, under our structure hypotheses, we exhibit an explicit precision \[prec_{MF5}(\left\lbrace f_1, \dots, f_s \right\rbrace, D, w),\] essentially given by minors of the Macaulay matrices defined by $f,$ such that approximations of $f$ up to $prec_{MF5}$ determine well-defined approximation of Gröbner bases, compatible with the precision and with unambiguous leading terms. We provide in Definition \ref{App GB} a suitable notion of approximate Gröbner bases regarding to finite-precision coefficients. We define an approximate $D$-Gröbner basis accordingly.
To compute such $D$-Gröbner bases, we define in Algorithm \ref{algo wMF5} the weak Matrix-F5 algorithm, with the following result:
\begin{theo} \label{thm wMF5}
Let $(f_1,\dots,f_s) \in K[X_1,\dots, X_n]^s$ be homogeneous polynomials satisfying  \textbf{H1} and \textbf{H2}. Let $(f_1',\dots,f_s')$ be approximations of the $f_i$'s with precision $m$ on the coefficients.
Then, if $m$ is large enough, an approximate $D$-Gröbner basis of $(f_1',\dots,f_s')$ regarding to $w$ is well-defined. It can be computed by the weak Matrix-F5 algorithm.

Moreover, if the $f_i$'s are in $R[X_1,\dots,X_n],$ then $m \geq prec_{MF5}$ is enough, and the loss in precision is upper-bounded by $prec_{MF5}.$

The complexity is in $O \left( sD \binom{n+D-1}{D}^3 \right)$ operations in $R$ at precision $m,$ as $D \rightarrow +\infty.$
\end{theo}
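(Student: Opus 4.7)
The plan is to organize the argument around three ingredients: the behavior of the classical Matrix-F5 algorithm under the regularity hypothesis \textbf{H1}, a precision analysis of Gaussian row-echelonization over the discrete valuation ring $R$, and the use of the weakly-$w$-ideal hypothesis \textbf{H2} to rule out leading-monomial collisions, so that the output fits the notion of an approximate $D$-Gröbner basis from Definition~\ref{App GB}.

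First I would recall the Matrix-F5 framework. For each degree $d$ from $0$ to $D$, the algorithm builds a Macaulay matrix $M_d$ whose rows are the shifts $m \cdot f_i$ by monomials $m$ of the appropriate degree, the columns being indexed by monomials in decreasing order with respect to $w$, and performs a row-echelonization. Hypothesis \textbf{H1} is precisely what guarantees, via the F5 criterion, that no row reduces unexpectedly to zero, and that the nonzero rows of the echelonized matrix form a truncated Gröbner basis in degree~$d$. The proof then proceeds by induction on $d$: new rows in $M_{d+1}$ are produced by variable-multiplication from the degree-$d$ output, and the F5 signatures discard the rows already known to be reducible.

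Next I would perform the precision analysis. Each pivot appearing during the row-reduction of $M_d$ is, up to a unit of $R$, a minor of a submatrix of $M_d$; the loss of precision along a complete row-reduction is therefore controlled by the sum of valuations of the minors that actually occur as pivots. Collecting these bounds over all the matrices $M_d$ with $d \leq D$ is exactly what defines $prec_{MF5}(\{f_1,\dots,f_s\}, D, w)$. Under \textbf{H2}, the combinatorics of which columns carry pivots is determined by the residue-field echelon form, and is therefore insensitive to perturbations smaller than $prec_{MF5}$: the leading monomials selected by the approximate computation coincide with those of the exact Gröbner basis. Combining this with the bound above gives simultaneously the well-definedness of the approximate $D$-Gröbner basis and the announced upper bound on the precision loss whenever $m \geq prec_{MF5}$.

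The main obstacle is to propagate precision consistently from one degree to the next. Rows inherited from the degree-$(d-1)$ computation already carry some loss, and the subsequent Macaulay multiplications together with the next elimination must not amplify it beyond the bound set by $prec_{MF5}$. This requires a uniform estimate that tracks only the minors appearing as F5-style pivots, rather than all minors of $M_d$, and relies crucially on \textbf{H2} to ensure that the pivot positions chosen in the approximate matrix match those of the exact one. Once this is in place, the complexity statement is a standard count: $M_d$ has $O(s\binom{n+d-1}{d})$ rows and $\binom{n+d-1}{d}$ columns, so a cubic Gaussian elimination at working precision $m$ costs $O(s\binom{n+d-1}{d}^3)$ operations in $R$; summing for $d \leq D$ yields the advertised bound $O(sD\binom{n+D-1}{D}^3)$.
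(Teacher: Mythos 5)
Your overall architecture (F5 correctness under \textbf{H1}, minor-based precision bounds, cubic complexity count) matches the paper's, but two essential mechanisms are missing or mischaracterized. First, the role of \textbf{H2}. You claim that "the combinatorics of which columns carry pivots is determined by the residue-field echelon form" --- this is not how the argument goes, and it cannot be: pivots may have positive valuation, so the residue-field picture does not determine them. The actual difficulty is that once the Gaussian elimination reaches a column with no pivot, one cannot decide from finite-precision entries which later column carries the next leading term (see the $2\times 4$ example in the paper). The algorithm therefore stops the elimination of $\mathscr{M}_{d,i}$ at the first pivot-free column $l_{d,i}$ and \emph{completes} the echelon basis with monomial multiples of rows of $\widetilde{\mathscr{M}_{d-1,i}}$. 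Hypothesis \textbf{H2} is used precisely to prove that this completion always succeeds: if $x^{\alpha_l}\notin LM(I)$, then every $x^{\alpha_u}\in LM(I)$ with $u\ge l$ cannot be a leading monomial of the reduced Gr\"obner basis, hence is a nontrivial multiple of one, hence is already realized by a row of $\widetilde{\mathscr{M}_{d-1,i}}$. Your proposal never performs or justifies this step, so it does not actually resolve the ambiguity that a pivot-free column creates.

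Second, the propagation of precision across degrees, which you correctly flag as "the main obstacle," is left unresolved. The paper's resolution is an algorithmic choice: each $\mathscr{M}_{d,i}$ is built from monomial multiples $x^\alpha f_i$ of the \emph{input} polynomials, so its entries are known at the input precision $m$, and the loss incurred at stage $(d,i)$ is bounded by $\Delta_{d,i}$ independently of earlier stages; this is why $prec_{MF5}$ is a \emph{maximum} of the $\Delta_{d,i}$ and not a sum. The completion rows inherited from degree $d-1$ never enter the output basis; they only certify that no leading monomial is missing. If instead one row-reduces the already-reduced rows from degree $d-1$, as in the classical Matrix-F5, the losses could compound and the stated bound would not follow. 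A smaller imprecision: the individual pivots are not minors up to units; it is their \emph{product} up to column $l$ that equals, up to sign, an $l$-minor, and Proposition~\ref{refined} shows its valuation attains the minimum over all $l$-minors of the first $l$ columns, which is what makes $\Delta_{d,i}$ (and hence $prec_{MF5}$) the right quantity.
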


We remark that the conjecture of Moreno-Socias implies that sequences satisfying \textbf{H1} and \textbf{H2} for the grevlex ordering are generic. We also remark that operations in $R = \mathbb{Z}_p$ or $\mathbb{F}_p[[t]]$ at precision $m$ can be computed, by usual algorithms, in $\widetilde{O}(m \log p)$ binary operations.

We explain in Section \ref{num exp} why $prec_{MF5}$ is not sharp, along with numerical examples.

If one lean more on precision than time-complexity, we show in Theorem \ref{thm wM} that, under the assumptions \textbf{H1} and \textbf{H2} and the $f_i$'s in $R[X_1,\dots,X_n],$, we can drop the F5 criterion in order to obtain a smaller sufficient precision for an approximate Gröbner basis to be computed: $prec_{Mac},$ see Definition \ref{prec Mac}. Time-complexity is then in $O \left( s^2 D\binom{n+D-1}{D}^3 \right)$ operations in $R$ at precision $m,$  as $D \rightarrow +\infty.$

Furthermore, the special hypotheses \textbf{H1} and \textbf{H2} allow lifting of Gröbner bases: given $G$ an approximate Gröbner basis of $\left\langle F \right\rangle$ and some $m,k$ and $M$ such that $(G+O(p^k))=(F+O(p^m)) \cdot (M + O(p^m)),$ one can compute in $O \left( (s+\sharp G ) \ \binom{n+D-1}{D}^2 \right)$ operations at precision $m+l$ an approximate Gröbner basis of $F+O(p^{m+l}).$ $\sharp G $ stands for the cardinal of $G.$ This implies that to compute the reduced Gröbner basis $G$ of some $F =(f_1,\dots,f_s) \in \mathbb{Q}[X_1,\dots,X_n]$ satisfying \textbf{H1} and \textbf{H2}, it is enough to perform the computation of an approximate Gröbner basis at some high-enough entry precision $m$ and lift it to $\mathbb{Q}[X_1,\dots,X_n].$ The total complexity is then in $O \left( s^2 D \binom{n+D-1}{D}^3 \right)$ operations in at precision $m$ and $O \left( (s+ \sharp G  ) \binom{n+D-1}{D}^2 \right)$ operations in $\mathbb{Q}.$ In other words, the cost of the linear algebra is then borne by computation at finite precision.

Finally, even though all the previous results were presented for homogeneous entry polynomials, they can be extended by replacing the hypotheses \textbf{H1} and \textbf{H2} on the entry polynomials by the same hypotheses on their homogeneous components of highest degree.

\bigskip

\noindent \textbf{Structure of the paper.}
In Section 2, we explain the setting of our paper: finite-precision complete discrete-valuation fields, and analyze the Gaussian row-echelon algorithm when performed over such fields. Section 3 applies this analysis to the study of the Matrix-F5 algorithm. We then provide and analyze a weak Matrix-F5 algorithm, and a variant for precision-efficiency.
Section 4 addresses the computation of the differential of reduced Gröbner bases computation and the optimality of the \textbf{H1} and \textbf{H2} hypotheses. Section 5 provides some experimental examples. 
Section 6 shows how one can lift efficiently under these hypotheses, and finally, Section 7 addresses the affine case.

\section{Finite-precision CDVF and row-echelon form computation}

The objective of this Section is first to introduce finite-precision complete discrete-valuation fields. We study the behavior of the precision when performing elementary operations, and from it, derive an analysis of the loss in precision when performing Gaussian row-echelon form computation 

\subsection{Setting}

Throughout this paper, $K$ is a field with a discrete valuation $val$ such that $K$ is complete with respect to the norm defined by $val$. We denote by $R=O_K$ its ring of integers, $m_K$ its maximal ideal and $k=O_K/m_K$ its fraction field. We denote by CDVF (complete discrete-valuation field) such a field. We refer to Serre's Local Fields \cite{Serre:1979} for an introduction to such fields. Let $\pi \in R$ be a uniformizer for $K$ and let $S_K \subset R$ be a system of representatives of $k=O_K/m_K.$ All numbers of $K$ can be written uniquely under its $\pi$-adic power series development form: $\sum_{k \geq l} a_k \pi^l$ for some $ l \in \mathbb{Z}$, $a_k \in S_K$.

The case that we are interested in is when $K$ might not be an effective field, but $k$ is (\textit{i.e.} there are constructive procedures for performing rational operations in $k$ and for deciding whether or not two elements in $k$ are equal). Symbolic computation can then be performed on truncation of $\pi$-adic power series development.
We will denote by finite-precision CDVF such a field, and finite-precision CDVR for its ring of integers. Classical examples of such CDVF are $K = \mathbb{Q}_p$, with $p$-adic valuation, and $\mathbb{Q}[[X]]$ or $\mathbb{F}_q[[X]]$ with $X$-adic valuation.
We assume that $K$ is such a finite-precision CDVF.

Let $A = K[X_1,\dots, X_n]$, and $w$ a monomial order on $A$. Let $B=R[X_1,\dots, X_n]$. We denote by $A_d$ the degree-$d$ homogeneous polynomials of $A,$ $A_{\leq d}$ the polynomials of total degree less than $d$, and when $u=(u_1,\dots,u_n) \in \mathbb{Z}_{\geq 0}^n$, we write $X^u$ for $X_1^{u_1} \dots X_n^{u_n}.$ If $P \in A$ is an homogeneous polynomials, we denote by $\vert P \vert$ its degree.

\subsection{Precision over $R$ and its behavior}

Elements of $K$ can be symbolically handled only up to a truncation of their $\pi$-adic power series development. Therefore, we manipulate quantities of the form $\sum_{i=k}^{d-1} a_i \pi^i+O(\pi^d)$, where $O(\pi^d)$ denotes $\pi^d R$. 

\begin{deftn}
To study the precision on an approximation of a number in $K,$ we define the \textbf{order} (or absolute precision) of $x=\sum_{i=k}^{d-1} a_i \pi^i+O(\pi^d)$ to be $d$.
\end{deftn}

The number of significant digits of $x$ would be a much more involved but as least as interesting object to study. 

We can track the behavior of the order when performing elementary operations. For this, let $n_0 < m_0, $ $n_1  < m_0$ be integers, and $\varepsilon=\sum_{j=0}^{m_0-n_0-1} a_j \pi^j$, $\mu =\sum_{j=0}^{m_1-n_1-1} b_j \pi^j$, with $a_j,b_j \in S_K$, and $a_0,b_0 \neq 0.$ It is then well-known that, 
\begin{align*}
(\varepsilon \pi^{n_0}+O(\pi^{m_0}))+(\mu \pi^{n_1}+O(\pi^{m_1})) &=\varepsilon \pi^{n_0}+\mu \pi^{n_1} \\
& \hphantom{=} +O(\pi^{min(m_0,m_1)}), \\
\end{align*}
and consequently, the addition of two number know up to order $n$ is known up to order $n.$
Similar formulae exist for all elementary operations. We only use the following:

\begin{prop}[division]
\label{div}
\[ \frac{\varepsilon \pi^{n_1}+O(\pi^{m_1})}{\mu \pi^{n_0}+O(\pi^{m_0})}=
\varepsilon \mu^{-1} \pi^{n_1-n_0}+O(\pi^{\min (m_1-n_0,m_0+n_1-2n_0)}).\]
\end{prop}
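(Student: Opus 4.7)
The plan is to reduce to a geometric-series expansion of the reciprocal of the denominator, then expand the product. First I would exploit that $\mu \in R^{\times}$ (since its constant $\pi$-adic coefficient $b_0$ is nonzero) to factor the denominator as
\[
\mu \pi^{n_0}+O(\pi^{m_0})=\mu \pi^{n_0}\bigl(1+\eta\bigr),
\qquad \eta=\mu^{-1}\pi^{-n_0}\cdot O(\pi^{m_0})=O(\pi^{m_0-n_0}).
\]
Because $n_0<m_0$, the element $\eta$ has strictly positive valuation, so the series $\sum_{k\geq 0}(-\eta)^k$ converges in $R$ by completeness of $K$. This gives
\[
\frac{1}{\mu\pi^{n_0}+O(\pi^{m_0})}
= \mu^{-1}\pi^{-n_0}\bigl(1-\eta+\eta^2-\cdots\bigr)
= \mu^{-1}\pi^{-n_0}+O(\pi^{m_0-2n_0}),
\]
since the tail of the series, multiplied by $\mu^{-1}\pi^{-n_0}$, has valuation at least $(m_0-n_0)+(-n_0)=m_0-2n_0$.

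Next I would multiply this expansion by the numerator $\varepsilon\pi^{n_1}+O(\pi^{m_1})$. Distributing produces four contributions: the main term $\varepsilon\mu^{-1}\pi^{n_1-n_0}$, and three error terms of valuations at least $n_1+m_0-2n_0$, $m_1-n_0$, and $m_1+m_0-2n_0$ respectively. The last is bounded below by each of the first two (using $n_1<m_1$ and $n_0<m_0$), so it is absorbed. Taking the minimum of the two remaining bounds yields exactly the claimed error order $\min(m_1-n_0,\,m_0+n_1-2n_0)$.

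The only genuinely delicate point is the convergence of the geometric series, which is precisely where completeness of $K$ is used; everything else is arithmetic with valuations. A small care must be taken because $n_0,n_1$ may be negative, but the inequalities $n_0<m_0$ and $n_1<m_1$ are exactly what ensures $\eta$ has positive valuation and that the cross terms do not dominate the main term. No further machinery is required, and the argument is essentially the same one that underlies Hensel-type expansions in a CDVF.
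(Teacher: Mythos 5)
Your proof is correct and follows essentially the same route as the paper's: factor the denominator as $\mu\pi^{n_0}(1+\eta)$ with $\eta=O(\pi^{m_0-n_0})$, invert the unit factor (the paper simply asserts $(1+O(\pi^{m_0-n_0}))^{-1}=1+O(\pi^{m_0-n_0})$ where you justify it by the geometric series), then expand the product and discard the dominated error term $O(\pi^{m_1+m_0-2n_0})$ using $n_0<m_0$ and $n_1<m_1$. The valuation bookkeeping on all four cross terms matches the paper's computation exactly.
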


As a consequence, we can already see why finite-precision CDVF have a very different behavior than floating-point numbers: if $a=x+O(\pi^n)$ and $b = y + O(\pi^n)$ are elements of $K$ known up to the order $n$, then $a+b=(x+y)+O(\pi^n)$ is known up to the order $n$. Because of round-off errors, this does not happen with floating-point numbers. 

\subsection{The Gaussian row-echelon form algorithm}

We now apply Lemma \ref{div} to the study of Gaussian row-echelon form computation.
We first begin with recalling in Algorithm \ref{gaussian elim} what we mean with row-echelon form and Gaussian elimination.

\begin{deftn}
Let $M$ be an $n \times m$ matrix. Let $ind_{M} : \: \left\lbrace 1, r \right\rbrace \rightarrow \mathbb{Z}_{\leq 0} \cup \left\lbrace \infty \right\rbrace$ map $i$ to the index of the columns of the first non-zero entry on the $i$-th row of $M.$ Then $M$ is said to be under row-echelon form if the index function is strictly increasing.

$M$ is said to be under row-echelon form up to permutation if there exists $P$ a permutation matrix such that $PM$ is under row-echelon form.
\end{deftn}

\IncMargin{1em}
\begin{algorithm} \label{gaussian elim}
 \DontPrintSemicolon

 \caption{The Gaussian elimination algorithm}
 \SetKwInOut{Input}{input}\SetKwInOut{Output}{output}

 \Input{$M$, an $n \times m$ matrix.}
 \Output{$\widetilde{M}$, a row-echelon form of $M,$ up to permutation. }

\BlankLine

\Begin{
$\widetilde{M} \leftarrow M$;  \;
\eIf{$n_{col}=1$ or $n_{row}=0$ or $M$ has no non-zero entry}{
				Return $\widetilde{M}$; \;
	}{			
Find the coefficient $M_{i,1}$ on the first column with the smallest valuation; \;
Swap rows to put it in first row; \;
By pivoting with the first row, eliminate the coefficients of the other rows on the first column; \;
Proceed recursively on the submatrix $\widetilde{M}_{i \geq 2, j \geq 2}$; \;
Return $\widetilde{M}$; \;}}

\end{algorithm}
\DecMargin{1em}

We emphasis that when eliminating coefficients with the pivot, we produce real zeros, and not some $O(\pi^k).$
Otherwise, the resulting matrix is not under row-echelon form (up to permutation).

\subsection{How to pivot}

We now make clear how one can pivot and eliminate coefficients.

\begin{prop}[Pivoting] \label{piv}
Let $n_0 \leq n_1 <n $ be integers, and $\varepsilon=\sum_{j=0}^{n-n_1-1} a_j \pi^j$, $\mu =\sum_{j=0}^{n-n_0-1} b_j \pi^j$, with $a_j,b_j \in S_K$, and $a_0,b_0 \neq 0.$

To put a \lq\lq{}real zero\rq\rq{} on the coefficient $M_{i,j}= \varepsilon \pi^{n_1}+O(\pi^n)$, we eliminate it with a pivot $piv=\mu \pi^{n_0}+O(\pi^n)$ on row $L$. This can be performed by the following operation on the $i$-th row $L_i$:

\[L_i\leftarrow L_i-\frac{M_{i,j}}{piv}L=L_i+(\varepsilon \mu^{-1} \pi^{n_1-n_0}+O(\pi^{n-n_0}))L,\]
along with the symbolic operation $M_{i,j}\leftarrow 0.$
\end{prop}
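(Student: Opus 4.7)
The plan is to reduce the statement to a direct application of Proposition \ref{div}, combined with a valuation check that confirms the symbolic zero assignment is safe at working precision $n$.

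First, I would compute $M_{i,j}/piv$ by invoking Proposition \ref{div} with the substitutions $m_0 = m_1 = n$. This immediately yields
\[
\frac{\varepsilon \pi^{n_1} + O(\pi^n)}{\mu \pi^{n_0} + O(\pi^n)}
= \varepsilon \mu^{-1} \pi^{n_1 - n_0} + O\bigl(\pi^{\min(n - n_0,\; n + n_1 - 2n_0)}\bigr).
\]
Next, I would invoke the hypothesis $n_0 \leq n_1$ to simplify the precision: since $n + n_1 - 2n_0 = (n - n_0) + (n_1 - n_0) \geq n - n_0$, the minimum collapses to $n - n_0$. This recovers exactly the multiplier $\varepsilon \mu^{-1} \pi^{n_1 - n_0} + O(\pi^{n - n_0})$ appearing in the row operation, so the first equality in the statement holds.

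Second, I would verify that after the row operation, the coefficient in position $(i,j)$ is indeed zero up to precision $\pi^n$, so that setting $M_{i,j} \leftarrow 0$ symbolically loses nothing at the working precision. Expanding the product $(\varepsilon \mu^{-1} \pi^{n_1 - n_0} + O(\pi^{n - n_0})) \cdot piv$, the four contributions are $\varepsilon \pi^{n_1}$ (main term), $O(\pi^{n + n_1 - n_0})$, $O(\pi^n)$, and $O(\pi^{2n - n_0})$; since $n_1 \geq n_0$ and $n > n_0$, all error terms have valuation at least $n$. Hence the product equals $M_{i,j} + O(\pi^n)$, and so after subtraction from $L_i$ the $(i,j)$-entry becomes an $O(\pi^n)$ with no significant content at the working precision.

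The only genuinely substantive step is the simplification of the minimum in Proposition \ref{div}, and it is immediate from $n_0 \leq n_1$; the rest is valuation bookkeeping. The justification for the symbolic replacement by an exact $0$ is exactly the motivation recalled in the paragraph preceding the statement: to preserve a true row-echelon shape (as opposed to a ``row-echelon form up to some $O(\pi^k)$''), one must replace the eliminated entries, which carry no useful information at this precision, by genuine zeros.
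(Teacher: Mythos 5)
Your proof is correct and follows essentially the same route as the paper's: both reduce the claim to Proposition \ref{div} with $m_0=m_1=n$ and collapse the minimum to $n-n_0$ using $n_0\leq n_1$. Your additional check that the multiplier times the pivot recovers $M_{i,j}+O(\pi^n)$ is a harmless elaboration of the paper's one-line remark that the assignment $M_{i,j}\leftarrow 0$ is just the symbolic part of the exact row operation.
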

\begin{proof}
The symbolic operations $M_{i,j}\leftarrow 0$ is just a part of the symbolic operation $L_i\leftarrow L_i-\frac{M_{i,j}}{piv}L$.
Yet, for any other coefficient of $L_i$, symbolic computation is not relevant and what is performed is $L_i+(\varepsilon \mu^{-1} \pi^{n_1-n_0}+O(\pi^{n-n_0}))L$.

Indeed, we prove that $\frac{M_{i,j}}{piv}=\varepsilon \mu^{-1} \pi^{n_1-n_0}+O(\pi^{n-n_0}).$

This is a direct consequence of Proposition \ref{div}: $\frac{M_{i,j}}{piv}=\frac{\varepsilon \pi^{n_1}+O(\pi^n)}{\mu \pi^{n_0}+O(\pi^n)}$, and therefore $\frac{M_{i,j}}{piv}=\varepsilon \mu^{-1} \pi^{n_1-n_0}+O(\pi^{n-n_0}),$ since $\min (n+n_1-2 n_0,n-n_0,n+n-2*n_0)=n-n_0$.

\end{proof}

\subsection{Gaussian row-echelon form computation}
We are now able to track the loss of precision when performing Algorithm \ref{gaussian elim} to compute a row echelon form of a matrix. The result is the following:

\begin{theo} \label{Prec-row}
Let M  be a matrix $n \times m$ ($0 \leq n \leq m$) with coefficients in $R$ all known with absolute precision $k \geq 0$ and such that its principal minor $\Delta=det((M_{i,j})_{1\leq i \leq n, 1 \leq j \leq n})$ satisfies $val(\Delta) < k$.

Then, the maximum loss of absolute precision while performing Gaussian row-reduction on M can be upper-bounded by $val(\Delta)$.
\end{theo}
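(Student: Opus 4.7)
The plan is to combine the one-step precision analysis of Proposition \ref{piv} with a determinant argument. Let $p_1, \dots, p_n$ denote the successive pivots selected by Algorithm \ref{gaussian elim} on the first $n$ columns of $M$, and set $v_i = val(p_i)$.

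The first step is to show that $\sum_{i=1}^n v_i = val(\Delta)$. Since the elementary row operations $L_i \leftarrow L_i - c\, L$ preserve the determinant of the principal $n \times n$ block and row swaps only affect its sign, this determinant is invariant in absolute value throughout the algorithm. At the end of the procedure, the principal block is upper-triangular with diagonal entries $p_1, \dots, p_n$, hence $\Delta = \pm \prod_i p_i$, which yields the claimed identity on valuations. The hypothesis $val(\Delta) < k$ then forces $v_i < k - \sum_{j < i} v_j$ at every step, so in the approximate computation the pivot never becomes indistinguishable from zero at the current working precision. This ensures that the approximate algorithm selects pivots with the same valuations as the exact one, so the $v_i$ are unambiguously defined.

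The second step is to propagate precision step by step. Writing $\kappa_i = k - \sum_{j < i} v_j$ for the absolute precision of the still-active rows entering step $i$, Proposition \ref{piv} says that the factor $M_{i,j}/p_i$ is known with precision $\kappa_i - v_i = \kappa_{i+1}$. Since the entries of the pivot row lie in $R$ (valuation $\ge 0$), subtracting $(M_{i,j}/p_i)\, L$ from $L_i$ produces entries still known at precision $\kappa_{i+1}$. The rows already selected as pivots are not modified afterwards, so they retain the precision $\kappa_i$ they had at the time of their selection. After $n$ steps every row therefore has absolute precision at least $\kappa_{n+1} = k - \sum_i v_i = k - val(\Delta)$, so the total loss is bounded by $val(\Delta)$.

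The main obstacle is keeping the two steps coupled: one must argue, simultaneously and by induction on $i$, that the approximate algorithm performs the same sequence of operations as the exact one (so the $v_i$ really are the pivot valuations arising from the principal minor) and that the precision remaining at each step is strictly larger than the next pivot's valuation (so the next pivot is distinguishable from zero and can indeed be identified). Both conditions reduce to the single inequality $\sum_i v_i = val(\Delta) < k$, which is exactly the hypothesis of the theorem; this is why this hypothesis appears in the statement.
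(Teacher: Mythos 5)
Your proof is correct and follows essentially the same route as the paper's: both track the one-step loss via Proposition \ref{piv}, observe that the row operations preserve the principal minor up to sign so that $\Delta$ is $\pm$ the product of the pivots, and conclude by induction that the cumulative loss is $\sum_i val(p_i) = val(\Delta)$. Your write-up is if anything slightly more careful than the paper's, since you explicitly verify that $val(\Delta) < k$ keeps each pivot distinguishable from zero at the current working precision.
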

\begin{proof}

To prove this result, we first study the pivoting process, and then conclude by induction on the number of rows.

When performing the Gauss row-echelon form computation, we first look for the coefficient $M_{i,1}$ on the first column with the smallest valuation. Then, we put it (via permutation of rows) to the first row, $L_1$.We denote it by $piv$, and let $n_1$ be its valuation.

As in Proposition \ref{piv}, we then pivot all the rows $L_i$ below the first one: $L_i\leftarrow L_i-\frac{M_{i,j}}{piv}L_1,$ and if we denote by $M^{(1)}$ the resulting matrix, then the coefficients of the sub-matrix $M^{(1)}_{2 \leqslant i \leqslant n, \: 2  \leqslant j \leqslant m}$ are known up to $O(\pi^{n-n_1})$.

We then proceed recursively with the pivoting process on this sub-matrix.

We first remark that the result is clear for matrices with only $n=1$ rows. We also remark that in the previous pivoting process, the operations performed on the rows change the principal minor only up to a sign, and we have  $\Delta = \pm piv \times \det M^{(1)}_{2 \leqslant i \leqslant n, \: 2  \leqslant j \leqslant n}.$

Then the result is clear by induction on $n$ the number of columns.
\end{proof}

\subsection{A more refined result}

In the following section, we will apply this result on row-echelon form computation to study the computation of Gröbner bases, but beforehand, a more sophisticated result is available if one consider matrices with possibly more rows than columns:

\begin{prop}\label{refined}
Let $M$ be a matrix $n \times m$ ($0 \leq n, m$) with coefficients in $R$ all known with absolute precision $k \geq 0$. Let $l \leq m$ be such that there is an $l$-minor on the $l$ first columns $C_1, \dots, C_l,$ with valuation strictly less than $k$.

Let $\Delta$ be the product of the pivots of the Gaussian row-reduction of $M$ up to column $l$.

Then, the maximum loss of absolute precision while performing Gaussian row-reduction on M, up to the $l$-th column, can be upper-bounded by $val(\Delta)$ and moreover, $val( \Delta)$ attains the smallest valuation of an $l$-minor on $(C_1, \dots, C_l).$ 
\end{prop}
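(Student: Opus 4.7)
My plan is to prove the two claims in turn. For the precision bound I would replay the inductive argument of Theorem \ref{Prec-row}, stopped after the $l$-th column. At step $j$, Proposition \ref{piv} shows that dividing by the chosen pivot $p_j$ (of valuation $v_j$) and eliminating below costs at most $v_j$ digits of absolute precision on the remaining submatrix. Summing over $j=1,\dots,l$ gives a total loss bounded by $v_1+\cdots+v_l=val(\Delta)$. The hypothesis $val(\Delta)<k$ guarantees that at each intermediate step the remaining precision $k-(v_1+\cdots+v_{j-1})$ strictly exceeds $v_j$, so each pivot is known to positive precision, is unambiguously nonzero, and the next greedy choice is well-defined.

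To identify $val(\Delta)$ with the smallest valuation of an $l$-minor on $(C_1,\dots,C_l)$, I would rely on two observations. First, the greedy choice of pivot (smallest valuation in the current column) forces every quotient $M_{i,j}/p_j$ appearing in the elimination to lie in $R$, so all row operations performed are of the form $L_i \leftarrow L_i + c\, L_j$ with $c \in R$, together with row swaps. Such operations send any $l$-minor built from a fixed set of $l$ columns to an $R$-linear combination of $l$-minors on the same columns, and they are reversible; hence the ideal generated by the $l$-minors of $(C_1,\dots,C_l)$ is preserved throughout the elimination. In a DVR this ideal is generated by the element of smallest valuation among the $l$-minors, which is therefore invariant.

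Second, after the $l$ pivoting steps the submatrix of the reduced matrix on columns $C_1,\dots,C_l$ has upper triangular shape with diagonal $p_1,\dots,p_l$ on its top $l$ rows and zero entries below. Consequently its only nonzero $l$-minor is $\pm\Delta$, so the minimum valuation of an $l$-minor there is exactly $val(\Delta)$. Combining this with the invariance from the previous paragraph yields that $val(\Delta)$ coincides with the minimum valuation of an $l$-minor on $(C_1,\dots,C_l)$ in the original $M$.

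I expect the most delicate part to be the joint bookkeeping of precision and pivot selection: one must ensure that throughout the elimination the residual precision $k-(v_1+\cdots+v_{j-1})$ remains strictly larger than $v_j$, so that the next pivot really is the minimum-valuation entry of its column and not an artefact of precision loss. This reduces to the telescoping inequality $v_1+\cdots+v_j \leq val(\Delta) < k$, which follows at once from $v_j \geq 0$ and the hypothesis.
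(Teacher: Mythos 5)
Your proof is correct and follows essentially the same route as the paper: the invariance of the ideal generated by the $l$-minors on $(C_1,\dots,C_l)$ under the (invertible over $R$) row operations, the observation that the reduced matrix has a unique nonzero $l$-minor there equal to $\pm\Delta$, and the fact that in a DVR an ideal is generated by any element of minimal valuation. Your additional bookkeeping — checking that the quotients by the pivots lie in $R$ and that the residual precision $k-(v_1+\cdots+v_{j-1})$ always exceeds $v_j$ so the pivot selection is unambiguous — is a welcome elaboration of points the paper leaves implicit, but it does not change the argument.
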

\begin{proof}
This comes from the following fact: in the ring of integers of a complete discrete valuation ring, an ideal $I$ is generated by any of its element whose valuation attains $\min \left( val(I) \right).$

Here, if we define $I_{\text{minor}}$ to be the ideal in $R$ generated by the $l$-minors on $(C_1, \dots, C_l)$, then $I_{\text{minor}}$ remains unchanged by any of the operations on the row of $M$ performed during the Gaussian row-echelon form computation (the matrices of these operations all are invertible over $R$).

Once the row-echelon form computation is completed, there is only one non-zero $l$-minor on  $(\widetilde{C}_1, \dots, \widetilde{C}_l)$ (the $l$ first columns of $\widetilde{M}$),  and its value is the product $\Delta$ of the pivots chosen during the computation.

Therefore, $\Delta$ generates $I_{\text{minor}}$, and attains $\min \left( val(I) \right).$
\end{proof}

As a consequence, Gaussian reduction on such a matrix $M$ up to column $l$ provide the choice of pivots which yields the smallest upper-bound of the loss in precision.

\section{Matrix F5 algorithm and precision issues}
In this section, we show that our analysis of the loss in precision during Gauss reduction can be applied to understand how we can compute Gröbner bases of some ideals in $K[X_1, \dots, X_n]$. Our main tool will be Faugère\rq{}s Matrix F5 algorithm in a slightly modified version, and to that intent, we first describe the idea of the Matrix F5 algorithm, in a general setting.

\subsection{Matrix-F5}

The main reference concerning the Matrix-F5 algorithm are Bardet's PhD thesis \cite{Bardet:2004}, Bardet, Faugère and Salvy's analysis of the complexity of the F5 algorithm in \cite{Bardet:2014} and Eder and Faugère's survey of F5 algorithms in \cite{Eder:2014}.
We first recall some basic facts about matrix-algorithm to compute Gröbner bases, and present the Matrix-F5 algorithm.

\begin{deftn}
Let $B_{n,d}=(x^{d_i})_{1 \leq i \leq \binom{n-1}{n+d-1}} $ be the monomials of $A_d$, ordered decreasingly regarding to $w$.
Then for $f_1, \dots, f_s \in A$ homogeneous polynomials, with $\vert f_i \vert = d_i$, and $d \in \mathbb{N}$, we define $Mac_d(f_1, \dots, f_s)$ to be the following matrix: 
\begin{eqnarray*}
&& \begin{array}{ccccc} x^{d_1} > & \dots &> & \dots & >x^{d_{\binom{n+d-1}{n-1}}} \end{array}  \\
\begin{array}{c}
x^{\alpha_{1,1}}f_1 \\
\vdots \\
x^{\alpha_{1,\binom{n+d-d_1-1}{n-1}}} \\
x^{\alpha_{2,1}} f_2 \\
\vdots \\
x^{\alpha_{s,\binom{n+d-d_s-1}{n-1}}} f_s \\
\end{array} && \begin{bmatrix}
\hphantom{x^{d_1} >} & \hphantom{ \dots } & \hphantom{ > } & \hphantom{ \dots } & \hphantom{>x^{d_{\binom{n-1}{n+d-1}}} } \\ 
~ & ~ & ~ & ~ & ~ \\ 
~ & ~ & ~ & ~ & ~ \\ 
\hphantom{x^{d_1} >} & \hphantom{x^\alpha} & \text{\Huge *} &    & ~ \\ 
~ & ~ & ~ & ~ & ~ \\ 
~ & ~ & ~ & ~ & ~ \\ 
\end{bmatrix} 
\end{eqnarray*}

with $x^{ \alpha_{i,j}} \in B_{n,d-d_i}$. The rows of the matrix $Mac_d(f_1, \dots, f_s)$ are the polynomials $x^{ \alpha_{i,j}} f_i$ written in the basis $B_{n,d}$ of $A_d$.

We note that $Im(Mac_d(f_1, \dots, f_s))=I \cap A_d$, and the first non-zero coefficient of a row of $Mac_d(f_1,\dots, f_i)$ is the leading coefficient of the corresponding polynomial.
\end{deftn}

\begin{theo}[Lazard \cite{Lazard:1983}] \label{echelon}
For an homogeneous ideal $I=(f_1,\dots , f_s) $, $f_1,\dots , f_s$ is a Gröbner basis of $I$ if and only if: for all $d \in \mathbb{N}$, $Mac_d(f_1, \dots, f_s)$ contains an \textit{echelon basis} of $Im(Mac_d(f_1, \dots, f_s))$.
\end{theo}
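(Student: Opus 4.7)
The plan is to argue both directions of the equivalence through the standard link between Gröbner bases and leading monomials, exploiting the fact that the leading monomial of row $x^{\alpha_{i,j}} f_i$ of $Mac_d(f_1,\dots,f_s)$ is exactly $x^{\alpha_{i,j}} \mathrm{LM}(f_i)$.

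For the direct implication, assume $(f_1,\dots,f_s)$ is a Gröbner basis of $I$. Then $\mathrm{LM}(I)=\langle \mathrm{LM}(f_1),\dots,\mathrm{LM}(f_s)\rangle$, so the set of leading monomials of degree-$d$ elements of $I$ equals
\[
\mathcal{L}_d \;=\; \bigl\{\, x^\alpha \mathrm{LM}(f_i) : 1\le i\le s,\ |x^\alpha|=d-d_i \,\bigr\}.
\]
Note that $\mathcal{L}_d$ is exactly the set of leading monomials of the rows of $Mac_d(f_1,\dots,f_s)$. Select, for every distinct element $m\in\mathcal{L}_d$, one row of the matrix with leading monomial $m$. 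These rows have pairwise distinct pivot columns, so they are linearly independent and in echelon form up to permutation. Moreover their cardinality is $|\mathcal{L}_d|=\dim_K(I\cap A_d)$ (a standard fact: a finite-dimensional graded piece of an ideal has dimension equal to the number of leading monomials it contains), so the selected rows form a basis of $I\cap A_d=\mathrm{Im}(Mac_d(f_1,\dots,f_s))$, hence an echelon basis in the sense of the statement.

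For the converse, suppose for every $d\in\mathbb{N}$ the matrix $Mac_d(f_1,\dots,f_s)$ contains an echelon basis $E_d$ of $I\cap A_d$. Since $I$ is homogeneous, to prove the Gröbner basis property it suffices to show that for every nonzero homogeneous $g\in I$, $\mathrm{LM}(g)\in\langle \mathrm{LM}(f_1),\dots,\mathrm{LM}(f_s)\rangle$. Write $g\in I\cap A_d$ in the basis $E_d$; since $E_d$ is in echelon form (up to permutation), the leading monomial of $g$ coincides with the leading monomial of at least one element of $E_d$, that is, with some $x^{\alpha_{i,j}} \mathrm{LM}(f_i)$, which lies in $\langle \mathrm{LM}(f_i)\rangle$. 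This gives the Gröbner property.

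I expect no real obstacle: the only careful step is the dimension-counting argument in the forward direction, which rests on the elementary fact that a family of polynomials with distinct leading monomials is linearly independent, and that in a subspace $V\subset A_d$ the cardinality of $\mathrm{LM}(V)$ equals $\dim V$. The rest is a direct translation between row operations on $Mac_d$ and membership of leading monomials in the monomial ideal generated by $\mathrm{LM}(f_1),\dots,\mathrm{LM}(f_s)$.
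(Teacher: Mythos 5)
Your proof is correct, and it is the standard argument for Lazard's theorem; the paper itself does not prove this statement but only cites \cite{Lazard:1983}, so there is nothing to compare against. Both directions are sound: the forward direction correctly combines the facts that $\mathcal{L}_d = LM(I\cap A_d)$ under the Gröbner hypothesis and that $\dim_K V = \sharp LM(V)$ for a subspace $V\subset A_d$, and the converse correctly uses that a combination of polynomials with pairwise distinct leading monomials has as leading monomial the largest one occurring with nonzero coefficient, together with the (needed, and correctly invoked) reduction to homogeneous $g$ since $I$ is a homogeneous ideal.
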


By \textit{echelon basis}, we mean the following
\begin{deftn}
Let $g_1,\dots, g_r$ be homogeneous polynomials of degree $d$. Let $M$ be the matrix whose $i$-th row is the row vector corresponding to $g_i$ written in $B_{n,d}$.
Then we say that $g_1,\dots, g_r$ is an \textit{echelon basis} of $Im(M)$ if there is a permutation matrix $P$ such that $PM$ is under row-echelon form.
\end{deftn}

From this theorem, it is easy to derive an algorithm to compute Gröbner bases: compute the row-echelon form of all the $Mac_d(f_1,\dots, f_s)$, for varying $d$. Faugère's F5 criterion provides a decisive improvement with a way to remove most of the unnecessary computation. One can look at Faugère's original article \cite{Faugere:2002} or to the survey of Eder and Faugère \cite{Eder:2014} for an introduction to the F5 criterion, but it can be summed up in the following theorem.

\begin{theo}[F5 criterion] \label{discard}
If $i \in \llbracket 2,s \rrbracket$, and if we discard all the rows $x^\alpha f_j$ of $Mac_d(f_1, \dots, f_i)$ such that $x^\alpha \in LM(I_{j-1})$, for all $j \in \llbracket 2,i \rrbracket$, and if we denote by $\overline{Mac_d(f_1, \dots, f_i)}$ this matrix, $Im(\overline{Mac_d(f_1, \dots, f_i)})=Im(Mac_d(f_1, \dots, f_i)).$ A reduction to zero of a row by elementary operations over the rows leads to a syzygy that is not principal.

If $(f_1, \dots, f_s)$ is a regular sequence, then $\overline{Mac_d(f_1, \dots, f_i)}$ is injective, and no reduction to zero can be performed.
\end{theo}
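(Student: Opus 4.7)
The statement contains three claims: (i) removing the rows indexed by $(j,x^\alpha)$ with $x^\alpha \in LM(I_{j-1})$ does not change the row-space; (ii) any reduction to zero of the truncated matrix corresponds to a non-principal syzygy; and (iii) if $(f_1,\dots,f_s)$ is regular, $\overline{Mac_d(f_1,\dots,f_i)}$ is injective. My plan is to treat them in this order, using only the classical Gröbner-basis reduction argument and the characterization of regular sequences as those whose only syzygies are Koszul.

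For (i), I would argue by a double induction on the pair $(j,x^\alpha)$, ordered first by $j$ and then by $w$. Suppose $x^\alpha \in LM(I_{j-1})$ and $j \geq 2$. Because $LM(I_{j-1}) = \bigcup_{\beta}\, x^\beta \cdot LM(I_{j-1})_{\text{gen}}$, I can find $g \in I_{j-1}$ with $|g| = |x^\alpha|$ and $LM(g) = x^\alpha$, say $g = x^\alpha + \tilde g$ with $LM(\tilde g) \prec x^\alpha$. Writing $g = \sum_{k<j} h_k f_k$ with $h_k \in A_{|x^\alpha|-d_k}$, the product $g f_j = \sum_{k<j}(h_k f_j) f_k$ is an $R$-linear combination of rows $x^\beta f_k$ of $Mac_d$ with $k<j$; each such row is in $Im(\overline{Mac_d(f_1,\dots,f_i)})$ by the outer induction on $k$. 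Then
\[
x^\alpha f_j \;=\; g f_j - \tilde g f_j,
\]
and $\tilde g f_j$ decomposes into rows $x^\beta f_j$ with $x^\beta \prec x^\alpha$, each of which is either kept in $\overline{Mac_d}$ (if $x^\beta \notin LM(I_{j-1})$) or handled by the inner induction. This shows $x^\alpha f_j \in Im(\overline{Mac_d})$.

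For (ii), I would observe that a principal (Koszul) syzygy is a scalar multiple of $x^\gamma(f_j \otimes e_k - f_k \otimes e_j)$ for some $k<j$ and $\gamma$. When expanded as a relation between rows of $Mac_d$, it features in particular the row $x^\beta f_j$ with $x^\beta$ equal to the leading monomial of $x^\gamma f_k$; by construction $x^\beta \in LM(\langle f_k \rangle) \subset LM(I_{j-1})$, so this row is discarded in $\overline{Mac_d}$. Hence no principal syzygy can be supported only on kept rows, so every reduction to zero in $\overline{Mac_d}$ corresponds to a non-principal syzygy.

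For (iii), I would argue by induction on $i$. The case $i=1$ is immediate since $f_1$ is a non-zero-divisor, so multiplication by $f_1$ is injective on $A_{d-d_1}$. For the inductive step, a syzygy of $\overline{Mac_d(f_1,\dots,f_i)}$ yields an identity $h f_i = -\sum_{k<i} g_k f_k \in I_{i-1}$, where $h$ is the polynomial formed by the coefficients of the kept rows $x^\alpha f_i$; by construction every nonzero monomial of $h$ lies outside $LM(I_{i-1})$. Regularity of $(f_1,\dots,f_i)$ means $f_i$ is a non-zero-divisor modulo $I_{i-1}$, so $h \in I_{i-1}$, forcing $LM(h) \in LM(I_{i-1})$; but $LM(h)$ equals the largest such monomial, a contradiction unless $h=0$. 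Then the remaining syzygy lies in $\overline{Mac_d(f_1,\dots,f_{i-1})}$, and the inductive hypothesis concludes. The main obstacle is the combinatorial matching in (i) and (ii), namely keeping track of which rows are kept versus discarded across the recursion — once this is set up, steps (i) and (iii) reduce to standard Gröbner reduction and Koszul injectivity, respectively.
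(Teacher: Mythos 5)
The paper does not prove this theorem: it is quoted as a known result, with the reader referred to Faugère's original article and to the Eder--Faugère survey. Your reconstruction is essentially the standard argument from that literature, and parts (i) and (iii) are correct as written: the double induction on $(j,x^\alpha)$ for the equality of images is exactly the classical reduction argument (the only implicit step being that the homogeneous $g\in I_{j-1}$ with $LM(g)=x^\alpha$ is normalised to leading coefficient $1$), and the injectivity argument via ``$h$ supported outside $LM(I_{i-1})$ yet $h f_i\in I_{i-1}$, so regularity forces $h\in I_{i-1}$ and hence $h=0$'' is the standard one, with the correct observation that the residual syzygy lives in $\overline{Mac_d(f_1,\dots,f_{i-1})}$ so the induction closes. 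For (ii), be aware that your argument only excludes syzygies that are \emph{scalar multiples of a single monomial times one Koszul generator} $x^\gamma(f_j e_k-f_k e_j)$ --- which is indeed the reading you announce --- and it cannot be strengthened to exclude arbitrary elements of the Koszul module: for $f=(x,x,y)$ the kept rows $yf_1$ and $yf_2$ of $\overline{Mac_2}$ coincide, and the resulting syzygy $y(e_1-e_2)=K_{23}-K_{13}$ lies in the module generated by the principal syzygies while being supported entirely on kept rows. So the second sentence of the theorem is only valid under the narrow interpretation of ``principal syzygy''; your proof is correct for that interpretation, and no argument could do better for non-regular sequences. Under hypothesis \textbf{H1}, which is the only case the paper actually uses, part (iii) makes the point moot.
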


Theorem \ref{discard} yields the matrix-F5 algorithm: compute the row-echelon form of the $\overline{Mac_d(f_1, \dots, f_i)}$ sequentially in $d$ and $i$, with the computation of $\overline{Mac_{d-d_i}(f_1, \dots, f_{i-1})}$ being enough to apply the F5-criterion on $Mac_d(f_1, \dots, f_i)$ to provide $\overline{Mac_d(f_1, \dots, f_i)}$.
One can build a Gröbner basis of $I$ by adding to the $f_i$'s the polynomials corresponding to the rows of the row-echelon form of $\overline{Mac_d(f_1, \dots, f_i)}$ which provide new leading monomial.

Nevertheless, there is no criterion on up to what $d$ Macaulay matrices should be echelonized. This is why we define $D$-Gröbner basis (see for example \cite{Faugere:2011}):

\begin{deftn}
Let $I$ be an ideal of $A$, $w$ a monomial ordering on $A$ and $D$ an integer. Then $(g_1, \dots , g_l)$ is a $D$-Gröbner basis of $I$ if for any $f \in I$, homogeneous of degree less than $D$, there exists $1\leq i \leq l$ such that, regarding to $w$, $LM(g_i)$ divides $LM(f)$.
\end{deftn}

Thus, if we perform the Matrix-F5 algorithm up to the degree $D$ Macaulay matrix $Mac_D(f_1, \dots, f_s)$, we obtain a $D$-Gröbner basis.

\subsection{Precision issues}

We now try to understand what happens when the entries are known only up to finite precision. To that intent, we give a definition of what we expect an approximate Gröbner basis to be:

\begin{deftn} \label{App GB}
Let $f_i+\sum_{\vert u \vert = d_i} O(\pi^{n_{u,i}}) X^u$, $1 \leq i \leq s$ , be approximations of homogeneous polynomials in $A$. The $n_{u,i}$ belong to $\mathbb{Z}_{\geq 0} \cup \left\lbrace + \infty \right\rbrace.$
Then an approximate Gröbner basis, regarding to the monomial ordering $w$, of the ideal generated by these polynomials is a finite sequence $(g_i+\sum_{\vert u \vert = \vert g_i \vert } O(\pi^{m_{u,i}}))$, $m_{u,i}  \in  \mathbb{Z}_{\geq 0} \cup \left\lbrace + \infty \right\rbrace,$ of approximation of polynomials such that: for any $a_{u,i} \in \pi^{n_{u,i}} R$, there exists some $b_{u,i} \in \pi^{m_{u,i}} R$ such that the $g_i + \sum_{\vert u \vert = \vert g_i \vert } b_{u,i} $'s form a Gröbner basis, regarding to $w$, of the ideal generated by the  $f_i+\sum_{\vert u \vert = d_i} a_{u,i} X^u$'s.
Moreover, we require that if $X^u$ is a monomial of degree $\vert g_i \vert $ such that $X^u >_w LM(g_i)$, then $m_{u,i}=+\infty$ (and the coefficient of $X^u$ in $g_i$ is zero). In other words, we require that the leading monomials of the $g_i$ does not depend on the approximation.
\end{deftn}

As seen in the previous section, if the input polynomials form a regular sequence, then all matrices considered in the F5 algorithm are injective. Yet, this is not enough in order to be able to certify that we get an approximate Gröbner bases.

For example, the injective matrix, 
\[\begin{bmatrix}
1+O(\pi^{10}) & 1+O(\pi^{10}) & 1+O(\pi^{10}) & 0 \\
1+O(\pi^{10}) & 1+O(\pi^{10}) & 1+O(\pi^{10}) & 1+O(\pi^{10}) \\
\end{bmatrix},
\]
become, after the first step in the computation of its row-echelon form.

\[\begin{bmatrix}
1+O(\pi^{10}) & 0 & 1+O(\pi^{10}) & 0 \\
0 & O(\pi^{10}) & O(\pi^{10}) & 1+O(\pi^{10}) \\
\end{bmatrix},
\]

Yet, there is no way, with only operations on the rows, to decide whether the coefficient of index $(2,2)$ is the first non-zero coefficient of the second row or if it is the one of index $(2,3)$ or $(2,4).$ Thus, we can not know what is the row-echelon form of the matrix nor the leading monomials of polynomials corresponding to its rows.

Hence, the $(f_1,\dots,f_s)$ such that the Matrix-F5 algorithm could give a satisfactory answer must have a special shape: when performing the row-echelon computation on the Macaulay matrices, no column without pivot is encountered. With $w$ being our monomial ordering, an ideal $\left\langle f_1,\dots, f_s \right\rangle$ such that every $Mac_{d}(f_1,\dots, f_j)$ satisfies this property is called a $w$-ideal. 

$w$-ideal have been heavily studied in the field of generic initial ideal: for example, in \cite{Conca:2005}, Conca and Sidman proved that the generic initial ideal (\textit{i.e.} with a generic change of variable) of a generic set of points in $\mathbb{P}^r$ is a $w$-ideal.

Yet, this is not the generic case as, for example, Pardue showed in \cite{Pardue:2010} that the ideal generated by 6 quadrics in 6 variables is not a grevlex-ideal.

Fortunately, we can study a somehow weaker condition: weakly-$w$-ideal.

\begin{deftn} \label{weak w}
Let $I$ be an ideal in $A$, and $w$ be a monomial order on $A$.
Then $I$ is said to be a weakly-$w$-ideal if, for all leading monomial $x^\alpha$ of the reduced Gröbner basis of $I,$ regarding to $w,$ for all $x^\beta$ such that $\vert \alpha \vert = \vert \beta \vert$ and $x^\beta > x^\alpha$, $x^\beta$ belongs to $LM(I)$ (according to $w$). 
\end{deftn}

Moreno-Socias has conjectured that this is generic in the following sense:

\begin{conj}[Moreno-Socias]
If $k$ is an infinite field, $s \in \mathbb{N}$, $d_1,\dots, d_s \in \mathbb{N}$, then there is a non-empty Zariski-open subset $U$ in $A_{d_1} \times \dots \times A_{d_s}$ such that for all $(f_1,\dots, f_s) \in U$, $I=(f_1,\dots, f_s)$ is a weakly-grevlex ideal.
\end{conj}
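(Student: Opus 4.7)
The statement is the Moreno-Socias conjecture in a slightly weakened form (asking only for the weakly-grevlex property, not the precise generic initial ideal predicted by Moreno-Socias), and it is a long-standing open problem, so I do not expect to produce a complete proof. Nonetheless, I will sketch the strategy any attempt would take and indicate where it gets stuck.

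The natural starting point is Galligo's theorem: after a generic linear change of coordinates $\varphi \in \mathrm{GL}_n(k)$, the initial ideal $\mathrm{in}_{\mathrm{grevlex}}(\varphi(I))$ is Borel-fixed, equivalently (in characteristic zero) strongly stable in the sense that $x_j \mid m \in J$ and $i<j$ imply $(x_i/x_j)\,m \in J$. Thus on a non-empty Zariski-open $U_0 \subseteq A_{d_1}\times\cdots\times A_{d_s}$ the initial ideal $J := \mathrm{in}_{\mathrm{grevlex}}(I)$ is strongly stable. Intersecting $U_0$ with the (non-empty, Zariski-open) locus where $(f_1,\ldots,f_s)$ is a regular sequence rigidifies the Hilbert series to $\prod_{i=1}^{s}(1-t^{d_i})/(1-t)^n$. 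The conjecture then reduces to the purely combinatorial statement that \emph{the} generic strongly stable ideal with that Hilbert function is weakly-grevlex.

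The main difficulty lies in this last step: the set of strongly stable monomial ideals with a prescribed Hilbert function is in general not a singleton, and there is no a priori reason the semicontinuous limit selected by a generic $(f_1,\ldots,f_s)$ should be the weakly-grevlex one. I would attempt an inductive reduction on $s$: assuming the statement for $(f_1,\ldots,f_{s-1})$, analyse the graded quotient $A/(f_1,\ldots,f_{s-1})$ degree by degree, and exploit that a generic $f_s$ cuts the monomial basis of this quotient from the top down in grevlex order, so that the new minimal generators of $J$ in degrees $d_s, d_s+1, \ldots$ occupy the grevlex-maximal monomials not yet present. If the induction carries through, this transversality is exactly the pattern required by the weakly-grevlex definition.

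\textbf{The hard part} is maintaining this compatibility uniformly across all degrees simultaneously: a single minimal generator appearing in an unexpected position at a high degree can wreck the property established at lower degrees, and controlling this requires input beyond the strong stability furnished by Galligo. This combinatorial rigidity is exactly what has kept the conjecture open, and the known cases (Moreno-Socias for $s \leq n$ in small degrees, results of Cho-Park, Capaverde-Gao, Pardue-Stillman for specific degree profiles) all rely on extra structural input that does not obviously generalise.
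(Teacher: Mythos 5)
The paper does not prove this statement: it is stated as an open conjecture (due to Moreno-Socias), with references to Moreno-Socias's thesis and Pardue's article for background only. You correctly recognize that no proof exists and refrain from claiming one; your sketch via Galligo's theorem, strong stability, and induction on $s$ is a fair account of the standard line of attack and of the combinatorial obstruction that keeps the conjecture open, so there is nothing to compare against and nothing to correct.
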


As a consequence, if Moreno-Socias conjecture holds, sequences satisfying \textbf{H1} and \textbf{H2} are generic.
We refer to Moreno-Socias's PhD Thesis \cite{Moreno:1991} or Pardue's article \cite{Pardue:2010} for an introduction to this conjecture.

\begin{rmk}
The choice of grevlex is important: as seen in \cite{Pardue:2010}, if we take 3 quadrics $(f_1,f_2,f_3)$ in $\mathbb{Q}[X_1,\dots, X_6]$, then generically, the ideal $I$ they span is neither lex nor weakly-lex ! 
Indeed, for lex $X_1 > \dots > X_6$, the leading monomials of $I$ in degree 2 are generically $X_1^2,$ $X_1 X_2$ and $X_1 X_3$. 
Yet, in degree 3, we generically have $X_2^3 \in LT(I)$ and $X_1 X_6^2 \notin LT(I)$ while $X_1 X_6^2 > X_2^3$, and $X_2^3$ is not a multiple of any of the leading monomial of $I$ in degree 2, $X_1^2,$ $X_1 X_2$ and $X_1 X_3$. Therefore, the ideal generated by 3 generic quadrics in 6 variables is neither lex nor weakly-lex.
\end{rmk}

\subsection{The weak Matrix-F5 algorithm}

We provide in Algorithm \ref{algo wMF5} the algorithm weak-MF5. We will see in the following subsections that if $(f_1,\dots,f_s)$ is a sequence of homogeneous polynomials in $A=K [X_1, \dots, X_n]$ satisfying \textbf{H1} and \textbf{H2},  and if the $f_i$'s are known up to a large enough precision $O(\pi^k)$ on their coefficients, then weak-MF5 can compute approximate $D$-Gröbner bases of $\left\langle f_1, \dots, f_s \right\rangle.$


\IncMargin{1em}
\begin{algorithm}[H] \label{algo wMF5}
 \DontPrintSemicolon

 \SetKwInOut{Input}{input}\SetKwInOut{Output}{output}
 \SetKwComment{Comment}

 \Input{$F=(f_1, \dots, f_s) \in R[X_1,\dots, X_n]^s$, homogeneous polynomials with respective degrees $d_1 \leq  \dots \leq  d_s$, and $D \in \mathbb{N}$, \\
 a term order $w$.}
 \Output{$(g_1, \dots, g_k) \in A^k$, a D-Gröbner basis of $Id(F)$, or error if $(f_1, \dots, f_s)$ does not satisfy \textbf{H1}, \textbf{H2} or the precision is not enough }

\BlankLine

\Begin{
$G \leftarrow \left\lbrace \right\rbrace$  \;
\For{$d \in \llbracket 0, D \rrbracket$}{
	$\widetilde{\mathscr{M}_{d,0}}:=\emptyset$ \;
	\For{$i \in \llbracket 1,s \rrbracket$}{
		$\mathscr{M}_{d,i}:=\widetilde{\mathscr{M}_{d,i-1}}$ \;
		\For{ $\alpha$ such that $\vert \alpha \vert +d_i=d$}{
			 \If{$x^\alpha$ is not the leading term of a row of $\widetilde{\mathscr{M}_{d-d_i,i-1}}$}{
				Add $x^\alpha f_i$ to $\mathscr{M}_{d,i}$ \;
			}
		}
		Compute $\widetilde{\mathscr{M}_{d,i}}$, the row-echelon form of $\mathscr{M}_{d,i}$, up to the first column with no non-zero pivot \;
		Replace the remaining rows of $\widetilde{\mathscr{M}_{d,i}}$ by multiple of rows of $\widetilde{\mathscr{M}_{d-1,i}}$, so as to obtain an injective matrix in row-echelon form $\widetilde{\mathscr{M}_{d,i}}$. \;
		\eIf{$\widetilde{\mathscr{M}_{d,i}}$ could not be completed}{
		      Return "\textbf{Error,} the ideals are not weakly-$w$, the sequence is not regular, or the precision is not enough ". }
		{ Add to $G$ all the rows of $\widetilde{\mathscr{M}_{d,i}}$ with a new leading monomial. \; } 
	}
}
Return $G$ \;
}
\caption{The weak-MF5 algorithm}
\end{algorithm}
\DecMargin{1em}


\begin{rmk}
At the beginning of the second \textit{for} loop, the classical Matrix-F5 uses $\mathscr{M}_{d,i}:=\widetilde{\mathscr{M}_{d,i-1}}$ instead of $\mathscr{M}_{d,i}:=\mathscr{M}_{d,i-1}.$ The former is faster ($\widetilde{\mathscr{M}_{d,i-1}}$ is already under row-echelon form), but we have chosen the latter since the analysis of the precision is simpler, and more efficient.

\end{rmk}
\begin{rmk}
Instead of adding to $G$ all the rows of $\widetilde{\mathscr{M}_{d,i}}$ with a new leading monomial, it is enough to add the rows whose leading monomial is not a multiple of the leading monomial of a polynomial in $G$, therefore, we will directly obtain at the end a minimal Gröbner base.
\end{rmk}

\subsubsection{Correctness}

We prove here that, regarding to symbolic computation (\textit{i.e.} disregarding precision issues), the  weak-MF5 algorithm indeed compute $D$-Gröbner bases.

\begin{prop} \label{correctness}
Let $(f_1,\dots,f_s) \in B^s$ be a sequence of homogeneous polynomials satisfying \textbf{H1} and \textbf{H2}. Then for any $D \in \mathbb{Z}_{\geq 0},$ the result of weak-MF5($(f_1,\dots,f_s),D$) is a $D$-Gröbner basis of the ideal $I$ generated by $(f_1,\dots,f_s)$.
If $(f_1,\dots,f_s)$ does not satisfy \textbf{H1} or \textbf{H2}, an error is raised.
\end{prop}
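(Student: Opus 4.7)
The plan is to prove by lexicographic double induction on $(d, i)$ the invariant: after iteration $(d, i)$, the matrix $\widetilde{\mathscr{M}_{d,i}}$ is an echelon basis of $I_i \cap A_d$, where $I_i = \langle f_1, \dots, f_i\rangle$. The base cases $d = 0$ (where $I_i \cap A_0 = 0$ since the $f_j$ are homogeneous of positive degree) and $i = 0$ (where $I_0 = 0$) are immediate.

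For the inductive step, I would first show that $\mathscr{M}_{d,i}$ has the same row space as $\overline{Mac_d(f_1,\dots,f_i)}$, namely $I_i \cap A_d$: the inductive hypothesis provides $\widetilde{\mathscr{M}_{d,i-1}}$ as an echelon basis of $I_{i-1} \cap A_d$, and the rows $x^\alpha f_i$ appended by the F5 selection (those with $x^\alpha \notin LM(I_{i-1} \cap A_{d-d_i})$, read off from $\widetilde{\mathscr{M}_{d-d_i,i-1}}$ by inductive hypothesis) are precisely those of the F5-filtered Macaulay matrix. Hypothesis \textbf{H1}, combined with Theorem \ref{discard}, then forces $\mathscr{M}_{d,i}$ to be injective, so no row collapses to zero during Gaussian row reduction.

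Next, hypothesis \textbf{H2} applied to $I_i$ implies that $LM(I_i) \cap A_d$ is an initial segment of the degree-$d$ monomials in $w$-order. Combined with injectivity, this implies that Gaussian row reduction on $\mathscr{M}_{d,i}$ finds a pivot on each of the first $\dim(I_i \cap A_d)$ columns in sequence, and terminates exactly when the rows are exhausted, without ever hitting a column with no pivot, so the \emph{replace remaining rows} fallback is never triggered. Hence $\widetilde{\mathscr{M}_{d,i}}$ is an echelon basis of $I_i \cap A_d$, closing the induction. Lazard's criterion (Theorem \ref{echelon}) applied to $\widetilde{\mathscr{M}_{d,s}}$ for $d \leq D$ then guarantees that the collected set $G$ is a $D$-Gröbner basis of $I$.

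For the converse: if \textbf{H1} fails, Theorem \ref{discard} produces a reduction to zero in some $\overline{Mac_d(f_1,\dots,f_i)}$, leaving $\widetilde{\mathscr{M}_{d,i}}$ short of pivots; if \textbf{H2} fails at some $(d,i)$, then $LM(I_i) \cap A_d$ is not an initial segment and a column without a pivot is encountered before the rank is exhausted. In either case the fallback step is triggered and cannot be completed, since any substitute row, being a multiple of a row of $\widetilde{\mathscr{M}_{d-1,i}}$, already lies in $I_i \cap A_d$ and cannot create an independent echelon pivot beyond those already present, so an error is returned. The main delicate point I expect is verifying the implication from \textbf{H2} (stated only on leading monomials of the reduced Gröbner basis) to the initial-segment property for $LM(I_i) \cap A_d$ at every degree $d$; this should follow from combining the weakly-$w$ definition with the compatibility of the monomial order under multiplication, but the monomial bookkeeping is the technical heart of the argument.
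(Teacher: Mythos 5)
Your induction set-up and the treatment of the F5 selection via Theorems \ref{discard} and \ref{echelon} match the paper's proof, but the step you yourself flag as the ``technical heart'' is in fact false, and this breaks the argument. Hypothesis \textbf{H2} does \emph{not} imply that $LM(I_i)\cap A_d$ is an initial segment of the degree-$d$ monomials for every $d$: the weakly-$w$ condition constrains only the monomials larger than a leading monomial \emph{of the reduced Gröbner basis}, and says nothing about monomials larger than a non-trivial multiple of such a leading monomial. Concretely, $I=\langle x^2\rangle\subset k[x,y,z]$ is (vacuously) weakly-grevlex, yet in degree $3$ one has $x^2z\in LM(I)$ while $xy^2,y^3\notin LM(I)$ and $xy^2>y^3>x^2z$ for grevlex, so $LM(I)\cap A_3$ is not an initial segment. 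The initial-segment property is exactly what distinguishes $w$-ideals from weakly-$w$-ideals in the paper, and the paper explicitly works with the weaker class.

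Consequently your claim that the ``replace remaining rows'' fallback is never triggered under \textbf{H1} and \textbf{H2} is wrong; in the example above the Gaussian reduction of $\mathscr{M}_{3,i}$ stops at the column of $xy^2$ and the row with leading monomial $x^2z$ must be supplied by the completion step. The actual content of the paper's proof is precisely to show that this completion always succeeds under \textbf{H2}: once a column without pivot is reached (so the corresponding monomial $x^{\alpha_l}\notin LM(I)$), the weakly-$w$ condition forces every later monomial of $LM(I)\cap A_d$ to be a \emph{non-trivial} multiple of some $LM(g_j)$, hence a multiple of a monomial of $LM(I\cap A_{d-1})$, so the missing rows can be taken as multiples of rows of $\widetilde{\mathscr{M}_{d-1,i}}$; injectivity from \textbf{H1} then guarantees the row count matches $\dim(I\cap A_d)$ and that the completed matrix is an echelon basis. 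Your argument for the error case inherits the same flawed premise (you assume the fallback only triggers on failure of \textbf{H1} or \textbf{H2}), so that part needs to be redone as well: the correct dichotomy is that the completion is routinely invoked, and it is its \emph{failure} that signals the loss of \textbf{H1} or \textbf{H2} (or of precision).
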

\begin{proof}
Let $(f_1, \dots, f_s) \in B^s$, homogeneous of degree $d_1 \leq \dots \leq  d_s$ and satisfying \textbf{H1} and \textbf{H2}. Let $\mathscr{M}_{d,i}$ be the matrix built with the F5-criterion at the beginning of the second \textbf{for} loop in Algorithm \ref{algo wMF5}, and $ \widetilde{\mathscr{M}_{d,i}}$ be the result at the end of this very same loop of $\mathscr{M}_{d,i}$ after row-echelon computation and completion with $ \widetilde{\mathscr{M}_{d-1,i}}.$

Let $\mathcal{P}(d,i)$ be the proposition: $\mathscr{M}_{d,i} = \overline{Mac_d(f_1, \dots, f_i)},$ $\widetilde{\mathscr{M}_{d,i}}$ is put under row-echelon form (up to permutation) by Algorithm \ref{algo wMF5} without raising an error, and $Im(\widetilde{\mathscr{M}_{d,i}}) = Im(Mac_d(f_1,\dots , f_i)).$ We prove by induction on $d$ and $i$ that for any $d \in \llbracket 0, D \rrbracket$ and $i \in \llbracket 1, s \rrbracket,$ $\mathcal{P}(d,i),$ holds.

First of all, $\mathcal{P}(d,i)$ is clear if $d < d_1$ since the corresponding matrices are empty.

Now, let $d \in \llbracket d_1, D \rrbracket$ be such that for any $0 \leq \delta \leq d$ and $i \in \llbracket 1, s \rrbracket,$ $\mathcal{P}(\delta,i)$ is true. We prove $\mathcal{P}(d,i)$ for all $i \in \llbracket 1, s \rrbracket.$

It is clear for $i=1$ since the ideal generated by $f_1$ is monogeneous.
Let $i \in \llbracket 1, s \rrbracket$ be such that for all $j \in \llbracket 1, i-1 \rrbracket,$ $\mathcal{P}(\delta,i)$ is true.

Then, by the induction hypothesis, $\mathscr{M}_{d,i-1} = \overline{Mac_d(f_1, \dots, f_{i-1})},$ $\widetilde{\mathscr{M}_{d-d_i,i-1}}$ is under row-echelon form (up to permutation) and $Im(\widetilde{\mathscr{M}_{d-d_i,i-1}}) = Im(Mac_d(f_1,\dots , f_{i-1})).$ By the F5  criterion (Proposition \ref{discard}), we then indeed have $\mathscr{M}_{d,i} = \overline{Mac_d(f_1, \dots, f_i)},$ $\widetilde{\mathscr{M}_{d,i}}.$
 
Now, we prove that the completion process can be performed without error. Let us denote by $x^{\alpha_u}$, for $u$ from $1$ to $\binom{n+d-1}{n-1}$, the monomials of degree $d$, ordered decreasingly according to $w$, and let $l$ be the index of the first column without pivot found during the computation of the Gaussian row-echelon form of $\mathscr{M}_{d,i} $. 
Let us denote by $r_i$, with $i$ from $1$ to $l-1$, the $l$ polynomials corresponding to rows of $\widetilde{Mac_{d}(f_1, \dots, f_s)}$ with leading monomial $x^{\alpha_i}$.
Their leading monomials belong to the $x^{\alpha_u}$, with $u \geq l$.

Let $(g_1, \dots,g_r)$ be the reduced Gröbner basis of $I$ according to $w$. Then, since there is no pivot on the column of index $l$, $x^{\alpha_l}$ is not a monomial of $LM(I)$. By definition of a weakly-$w$-ideal (hypothesis \textbf{H2}), this implies that if $x^{\alpha_u} \in LM(I)$ for some $u \geq l$, then $x^{\alpha_u}$ is not one of the $LM(g_i)$. This means that any $x^{\alpha_u} \in LM(I)$ with $u \geq l$ is a non-trivial multiple of one of the $LM(g_i)$. Such $x^{\alpha_u}$ is therefore a multiple of a monomial in $LM(I \cap A_{d-1})$.
As a consequence, since $Im(\widetilde{\mathscr{M}_{d-1,i}})=I \cap A_{d-1},$ and $\widetilde{\mathscr{M}_{d-1,i}}$ is under row echelon form (up to permutation), then for any $u\geq l$ such that $x^{\alpha_u} \in LM(I),$ there exists a polynomial $P_u$ corresponding to a row of $\widetilde{\mathscr{M}_{d-1,i}}$ and $k_u \in  \llbracket 1, n \rrbracket$ such that $LM(X_{k_u} P_u)=x^{\alpha_u}.$ With the F5 criterion and \textbf{H1}, $\mathscr{M}_{d,i} = \overline{Mac_d(f_1, \dots, f_i)},$ $\mathscr{M}_{d,i}$ is injective and its number of rows, $m$, is exactly the number of monomials in $LM(I \cap A_d).$ This implies that the completion process can be executed without error.

Let $(t_l,\dots,t_m)$ be the rows of $\widetilde{\mathscr{M}_{d,i}}$ obtained as multiples of rows of $\widetilde{\mathscr{M}_{d-1,i}}.$ Then the polynomials corresponding to $(r_1,\dots, r_{l-1},t_l,\dots,t_m)$ have respectively distinct leading monomials, and therefore, $\widetilde{\mathscr{M}_{d,i}}$ is under row-echelon form (up to permutation). Finally, $Im(\widetilde{\mathscr{M}_{d,i}}) \subset I \cap R_d = Im(Mac_d(f_1,\dots , f_i))$ and both have dimension $m$ over $K.$ Hence, $Im(\widetilde{\mathscr{M}_{d,i}}) = Im(Mac_d(f_1,\dots , f_i)).$

$\mathcal{P}(d,i)$ is then proven. By induction, it is now true for all $d \in \llbracket 0, D \rrbracket$ and $i \in \llbracket 1, s \rrbracket.$ As a consequence, the output of Algorithm \ref{algo wMF5} is indeed a $D$-Gröbner basis of $(f_1,\dots, f_s).$

Now, if $(f_1, \dots, f_s)$ is either not regular or their exist some $i$ such that $(f_1, \dots, f_i)$ is not weakly-$w,$ then in the first case, it means some rows of one of $\overline{Mac_d(f_1, \dots, f_i)}$ reduce to zero  (see \cite{Bardet:2004}). Therefore, the row-echelon computation will encounter a column without pivot and the completion of $\mathscr{M}_{d,i}$ in an echelon basis will not be possible, raising an error. The second case is similar.
\end{proof}

\begin{rmk}
As seen in the proof, the completion process is only here to ensure that no leading monomial is missing and therefore, we indeed have produced a $D$-Gröbner basis. It does not produce new polynomials for the Gröbner basis in development. Had $K$ been an effective field, then under the hypothesis \textbf{H2}, to stop the row-echelon form computation after the first columns without pivot is enough to get the polynomial in a minimal Gröbner basis. One could then ensure that the output is a Gröbner basis by the Buchberger criterion.
\end{rmk}

\subsubsection{Termination}

Since we restrict to computation of row-echelon form of the Macaulay matrices up to degree $D$, there is no termination issue.

Yet, if we want a Gröbner basis instead of a $D$-Gröbner basis, one can use the following result: (see \cite{Bardet:2014}, \cite{Giusti:1984}, \cite{Lazard:1983})

\begin{prop}
If $(f_i, \dots, f_s)$ is a regular sequence of homogeneous polynomials in $A$. Then, after a generic linear change of variables, the highest degree of elements of a Gröbner basis of $<f_1, \dots, f_s>$ for the grevlex ordering is upper-bounded by the Macaulay bound: $\sum_{i=1}^s \left( \vert f_i \vert-1 \right) +1.$
\end{prop}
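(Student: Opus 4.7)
The plan is to go through the Hilbert series and the index of regularity. Since $(f_1,\dots,f_s)$ is a homogeneous regular sequence in $A=K[X_1,\dots,X_n]$, a classical Koszul complex computation gives the Hilbert series of $A/I$, with $I=\langle f_1,\dots,f_s\rangle$, in closed form
\[ HS_{A/I}(t)=\frac{\prod_{i=1}^s(1-t^{d_i})}{(1-t)^n},\qquad d_i=|f_i|. \]
I would first invoke Noether normalization: after a generic linear change of coordinates, the last $n-s$ variables $X_{s+1},\dots,X_n$ form a homogeneous system of parameters modulo $I$, and in particular $X_n$ is a non-zero divisor on $A/I$.

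Next, I would use the key grevlex-specific fact (due to Lazard and Giusti): for the grevlex order, after this generic linear change, the maximal degree of an element appearing in the reduced Gröbner basis of $I$ is equal to (or bounded by) the index of regularity $d_{\mathrm{reg}}$ of the Artinian quotient
\[ A' = A/\bigl(I+\langle X_{s+1},\dots,X_n\rangle\bigr), \]
that is, the smallest degree $d$ such that $A'_d=0$. The reason this specifically works for grevlex is that setting the last variable to $0$ is compatible with taking leading terms in grevlex, so that the behavior of the Gröbner basis of $I$ is controlled by the successive quotients by generic linear forms.

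To conclude, observe that the Hilbert series of $A'$ is obtained by multiplying $HS_{A/I}(t)$ by $(1-t)^{n-s}$, yielding the polynomial
\[ HS_{A'}(t)=\prod_{i=1}^s\frac{1-t^{d_i}}{1-t}=\prod_{i=1}^s(1+t+\dots+t^{d_i-1}), \]
which has degree exactly $\sum_{i=1}^s(d_i-1)$. Hence the socle of the Artinian algebra $A'$ sits in degree at most $\sum_{i=1}^s(d_i-1)$, so $d_{\mathrm{reg}}\leq\sum_{i=1}^s(d_i-1)+1$, which is the Macaulay bound.

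The main obstacle is the grevlex-specific passage from the Gröbner basis of $I$ to the index of regularity of the Artinian quotient $A'$. This step requires both the genericity of the change of variables (so that $X_{s+1},\dots,X_n$ behave as a regular sequence on $A/I$) and the compatibility of grevlex with quotient by a generic linear form; the Hilbert series computation itself is then routine. Since the statement is standard and the paper cites \cite{Bardet:2014,Giusti:1984,Lazard:1983}, the proof in the paper is likely to just refer to these sources rather than redo the argument in detail.
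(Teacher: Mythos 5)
Your argument is correct and is essentially the standard one from the references the paper cites for this proposition (Lazard, Giusti, Bardet--Faug\`ere--Salvy): the paper itself gives no proof, only the citation, exactly as you anticipated. The only step that genuinely needs those references is the grevlex-specific passage from the Gr\"obner basis degrees in generic coordinates to the regularity of the Artinian reduction $A'$; the Koszul/Hilbert-series computation showing that $A'$ vanishes beyond degree $\sum_{i=1}^s(d_i-1)$ is routine and you have it right.
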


\subsubsection{Precision}

We can now prove Theorem \ref{thm wMF5}. Let $(f_i, \dots, f_s)$ be a sequence of homogeneous polynomials in $B$ satisfying \textbf{H1} and \textbf{H2}. To that intent, we first define the $\Delta_{d,i},$ which corresponds to the precision sufficient, by Proposition \ref{refined}, to compute the $\widetilde{\mathscr{M}_{d,i}}$ from $\mathscr{M}_{d,i}.$

\begin{deftn}
Let $l_{d,i}$ be the maximum of the $l \in \mathbb{Z}_{\geq 0}$ such that the $l$-first columns of $\overline{Mac_{d}(f_1,\dots, f_i)}=\mathscr{M}_{d,i}$ are linearly free. We define \[\Delta_{d,i}=\min \left( val \left( \left\lbrace \text{minor over the } l_{d,i} \text{-first columns of } \mathscr{M}_{d,i} \right\rbrace \right) \right).\]
\end{deftn}

We can now define $prec_{MF5}$. 

\begin{deftn} \label{precMF5}
We define the Matrix-F5 precision of $(f_1,\dots, f_s)$ regarding to $w$ and $D$ as:
\[prec_{MF5}(\left\lbrace f_1, \dots, f_s \right\rbrace, D, w) = \max_{d\leq D, \: 1\leq i \leq s} val \left( \Delta_{d,i} \right).\]
\end{deftn}

With Proposition \ref{refined} and the Proposition \ref{correctness},  this upper-bound is enough to compute the  $\widetilde{\mathscr{M}_{d,i}}$. Indeed, it is enough to compute the Gaussian row-echelon form of $\mathscr{M}_{d,i}$ up to column $l_{d,i}$ and then to complete this matrix with multiples of $\widetilde{\mathscr{M}_{d-1,i}}.$ This way, either they come from row-reduction or are multiple of rows of $\widetilde{\mathscr{M}_{d-1,i}},$ the leading monomials of the rows of $\widetilde{\mathscr{M}_{d,i}}$ are unambiguous. The fact that the completion process can be successfully completed implies that we are ensured we have obtained an echelon basis for $Im(\mathscr{M}_{d,i}).$

Therefore, $prec_{MF5}(\left\lbrace f_1, \dots, f_s \right\rbrace, D, w)$ is enough to compute approximate $D$-Gröbner bases by the weak-MF5 algorithm.

To conclude the proof, we remark that, in order order to facilitate the precision analysis, we have assumed that the input polynomials in the algorithm, $(f_1, \dots, f_l),$ are in $B.$ Yet, if the $(f_1, \dots, f_l),$ are in $A,$ one can still replace the $f_i$'s by the $\pi^{l_i} f_i \in A$ (for some $l_i$) and still generate the same ideal. This does not affect \textbf{H1} and \textbf{H2}, and one can still compute an approximate Gröbner basis if we know the $f_i$'s up to a large enough precision. Only our precision bound $prec_{MF5}$ is no longer available. 

\subsubsection{Complexity}

Asymptotically, the complexity to compute a $D$-Gröbner basis of $(f_1,\dots, f_s)$  is the same as the classical Matrix-F5 algorithm, that is to say, $O \left( sD \binom{n+D-1}{D}^3 \right)$ operations in $R$ at precision $m,$ as $D \rightarrow +\infty.$
One can see \cite{Bardet:2004} and \cite{Bardet:2014} for more about the complexity of the Matrix-F5 algorithm.

\subsection{Precision versus time-complexity}

In order to achieve a better loss in precision for the Gaussian row-echelon form computation, we suggest the following weak-Matrix algorithm:
\begin{itemize}
\item Compute the $\mathscr{M}_{d,i}$ as before, with the F5 criterion;
\item Instead of computing the row-echelon form of $\mathscr{M}_{d,i}$, one can perform the row-echelon form of the whole $Mac_{d}(f_1,\dots,f_i)$, up to the first column without non-zero pivot;
\item Finally, build $\widetilde{\mathscr{M}_{d,i}}$ by filling $\mathscr{M}_{d,i}$ with the linearly independent rows found by the previous computation over $Mac_{d}(f_1,\dots,f_i)$ and multiples of rows of $\widetilde{Mac_{d-1,i}}$, into a matrix under row-echelon form.
\end{itemize}

The following quantity defines a sufficient precision to compute $D$-Gröbner bases through this algorithm.

\begin{deftn} \label{prec Mac}
Let \[\Box_{d,i} =  \min \left( val \left( \left\lbrace \begin{array}{c}
\text{minor over the } l_{d,i} \text{-first} \\
\text{columns of } Mac_{d}(f_1,\dots,f_i)
\end{array}  \right\rbrace \right) \right). \]

We define the Macaulay precision of $(f_1, \dots, f_s)$ regarding to $w$ and $D$ as:
\[prec_{Mac}(\left\lbrace f_1, \dots, f_s \right\rbrace, D, w) =  \max_{d\leq D, \: 1\leq i \leq s} \Box_{d,i}.\] 
\end{deftn}

Indeed, with Proposition \ref{refined}, $prec_{Mac}(\left\lbrace f_1, \dots, f_s \right\rbrace, D, w)$ is enough to compute approximate $D$-Gröbner bases of sequences of homogeneous polynomials satisfying \textbf{H1} and \textbf{H2}, and it would achieve the best loss in precision that Gaussian row-reduction of Macaulay matrices can attain. We have $prec_{Mac} \leq prec_{MF5}.$

We can illustrate how $prec_{Mac}$ can be strictly smaller than $prec_{MF5}:$ let $f=(5x,y,25xy+z^2)$ in $\mathbb{Q}_5[x,y,z].$ Then $prec_{MF5}(f,2,grevlex(x>y>z))=3$ while  $prec_{Mac}(f,2,grevlex(x>y>z))=2$

Yet, this precision would come with a cost in time-complexity: row-reducing of the full Macaulay matrix $Mac_{d}(f_1,\dots, f_i)$ is $O \left( \binom{n+d-1}{d}^2 \times i \binom{n+d-1}{d} \right)$ (see \cite{Bardet:2014}).
This leads to a total time-complexity in $O \left( s^2 D\binom{n+D-1}{D}^3 \right)$ operations in $R$ at precision $m,$ as $D \rightarrow +\infty,$ while when using the F5-criterion, we only need $O \left( s D\binom{n+D-1}{D}^3 \right).$ To sum up:

\begin{theo} \label{thm wM}
Let $(f_1,\dots,f_s) \in A^s$ be homogeneous polynomials satisfying  \textbf{H1} and \textbf{H2}. Let $(f_1',\dots,f_s')$ be approximations of the $f_i$'s with precision $m$ on the coefficients.
Then, if $m$ is large enough, an approximate $D$-Gröbner basis of $(f_1',\dots,f_s')$ regarding to $w$ is well-defined. It can be computed by the weak Matrix algorithm.

Let $prec_{Mac}=prec_{Mac}(\left\lbrace f_1, \dots, f_s \right\rbrace, D, w).$
Then, if the $f_i$'s are in $B,$ a precision $m \geq prec_{Mac}$ is enough, and the loss in precision is upper-bounded by $prec_{Mac}.$ The complexity is in $O \left( s^2 D \binom{n+D-1}{D}^3 \right)$ operations in $R$ at precision $m,$ as $D \rightarrow +\infty.$
\end{theo}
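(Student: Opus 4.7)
The plan is to mirror the three-part structure used for Theorem \ref{thm wMF5}: correctness, precision, and complexity. For correctness, I would redo the induction of Proposition \ref{correctness} on $(d,i)$, with the modification that the Gaussian row-echelon computation is now performed on $Mac_d(f_1,\dots,f_i)$ rather than on $\overline{Mac_d(f_1,\dots,f_i)}$. By the F5 criterion (Theorem \ref{discard}), these two matrices have the same image. Hence the location $l_{d,i}$ of the first column without pivot is unchanged, and the rows obtained by row-reducing $Mac_d(f_1,\dots,f_i)$ up to that column have the same leading monomials as in the weak-MF5 case. The completion step by multiples of rows of $\widetilde{Mac_{d-1,i}}$ then proceeds identically: hypothesis \textbf{H2} forces every leading monomial in degree $d$ that sits beyond column $l_{d,i}$ to be a non-trivial multiple of some $LM(g_j)$, hence a multiple of a leading monomial from degree $d-1$, so the completion is successful and produces an injective row-echelon matrix whose image equals $Im(Mac_d(f_1,\dots,f_i))$.

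For the precision statement, I would apply Proposition \ref{refined} directly to the full matrix $Mac_d(f_1,\dots,f_i)$ restricted to its first $l_{d,i}$ columns. Since an $l_{d,i}$-minor of valuation strictly less than $m$ exists (it can be taken from the submatrix $\mathscr{M}_{d,i}$, whose rows already contain such a minor), the proposition says the loss of absolute precision during the row-reduction up to column $l_{d,i}$ is bounded by $\Box_{d,i}$. Taking the maximum over $d \leq D$ and $1 \leq i \leq s$ gives the bound $prec_{Mac}$. The inequality $prec_{Mac} \leq prec_{MF5}$ is a direct consequence of the fact that the rows of $\overline{Mac_d(f_1,\dots,f_i)}$ form a subset of the rows of $Mac_d(f_1,\dots,f_i)$, so the set of $l_{d,i}$-minors on the first $l_{d,i}$ columns of the latter contains those of the former, and minima over a larger set can only decrease. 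The well-definedness and the independence of the leading monomials from the chosen approximation follow as in the proof of Theorem \ref{thm wMF5}, since the leading monomials are determined by the location of the pivots and completion rows, both of which are combinatorially fixed by the image.

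For complexity, the cost of row-reducing the full $Mac_d(f_1,\dots,f_i)$ up to the first non-pivot column is $O\bigl(\binom{n+d-1}{d}^2 \cdot i \binom{n+d-1}{d}\bigr)$ operations in $R$ at precision $m$ (the matrix has at most $i\binom{n+d-1}{d}$ rows and $\binom{n+d-1}{d}$ columns). Summing over $i \in \llbracket 1,s\rrbracket$ and $d \in \llbracket 0,D\rrbracket$, the dominant contribution occurs at $d=D$, giving a total of $O\bigl(s^2 D\binom{n+D-1}{D}^3\bigr)$ operations in $R$ at precision $m$. The completion step by multiples of $\widetilde{Mac_{d-1,i}}$ is of lower order and is absorbed by this bound. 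Finally, if the $f_i$'s live only in $A$ rather than $B$, the usual device of multiplying each $f_i$ by an appropriate power of $\pi$ preserves \textbf{H1}, \textbf{H2} and the ideal, so an approximate Gröbner basis can still be computed (albeit without our explicit $prec_{Mac}$ bound).

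The main obstacle I expect is the clean justification that performing the reduction on $Mac_d(f_1,\dots,f_i)$ instead of $\overline{Mac_d(f_1,\dots,f_i)}$ does not alter $l_{d,i}$ nor the success of the completion step; this is delicate because the added rows can a priori shift pivots. The resolution is to observe that the images are equal, so the set of columns hosting a pivot in any row-echelon form is combinatorially determined, and therefore the two algorithms produce matrices of identical row-support structure, only with different numerical precision on the non-zero coefficients.
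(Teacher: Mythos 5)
Your proposal is correct and follows essentially the same route as the paper, which itself only sketches this result: reuse the correctness induction of Proposition \ref{correctness} (the equality of images guaranteed by the F5 criterion fixes the pivot-column profile), apply Proposition \ref{refined} to the full Macaulay matrix to get the bound $\Box_{d,i}$ and hence $prec_{Mac}$, and count the cost of row-reducing the larger matrices to obtain the extra factor of $s$. Your justification that the pivot locations are combinatorially determined by the common row space, and your derivation of $prec_{Mac}\leq prec_{MF5}$ from the inclusion of row sets, are details the paper leaves implicit but are argued correctly here.
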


\section{Topology and optimality} \label{Continuity}

\subsection{Continuity and optimality}

We can reinterpret Theorem \ref{thm wMF5} in the following way: the application $\Phi : A_{d_1} \times \dots \times A_{d_s}\rightarrow \mathcal{P}(A)$ that sends $f=(f_1,\dots,f_s)$ to the set $LM( \left\langle f_1, \dots, f_s \right\rangle )$ (its initial ideal) is locally constant at any sequence satisfying \textbf{H1} and \textbf{H2}. It is these properties that allow numerical stability at $f.$ One could show similarly they still holds for $K= \mathbb{R}.$ Yet in that case, finding an explicit neighborhood of $f$ would be much more involved since we could not apply Proposition \ref{div} and Theorem \ref{Prec-row}.

Concerning the optimality of the structure hypotheses, we remark that without \textbf{H1} or \textbf{H2}, the locally-constant property of $LM$ is not necessarily satisfied.
For example, in $K[X,Y,Z],$ $f=(X+Y,XY+Y^2+Z^2)$ satisfy \textbf{H1} and not \textbf{H2}, and one can consider the approximations $(X+(1+\pi^n)Y,XY+(1-\pi^n)Y^2+Z^2),$ intersecting any neighborhood of $f$ but yielding a different $LM$ than $LM(\left\langle f \right\rangle).$
Likewise, $f=(X+Y,X^2+XY)$ satisfy \textbf{H2} and not \textbf{H1}, with the same issue.

\subsection{Differentiability}

\subsubsection{On differential precision}

In \cite{Caruso:2014:2}, Caruso, Roe and Vaccon have proposed a way to track precision in an ultrametric setting through the usage of differential calculus. It essentially relies on the fact (see Lemma 3.4 of \cite{Caruso:2014:2})  that if $\phi :\: K^n \rightarrow K^m$ is differentiable at some point $x \in K^n$ and its derivative $\phi '(x)$ is surjective, then for any ball $B$ centered in $0$ and small enough, $\phi(x+B)=\phi(x)+\phi'(x) \cdot B.$
The ball $B$ represent the precision at $x,$ and therefore, a consequence of this result is that the precision on $\phi(x)$ defined by $B$ is exactly given by $\phi'(x) \cdot B.$  This allows a very simple and elegant way to handle precision in ultrametric settings: compute the differential of the mapping you are interested in and study its properties.

\subsubsection{Computation of the differential}

To apply this idea of differential precision, we will first compute the differential of the Gröbner bases computation.

Let $f=(f_1,\dots,f_s)$ be a sequence in $A_{d_1} \times \dots \times A_{d_s}$ satisfying \textbf{H1} and \textbf{H2}. Let $U$ be an open neighborhood in $A_{d_1} \times \dots \times A_{d_s}$ of $f$ given for example by $prec_{Mac}(f).$ Let $r$ be the cardinal of the reduced Gröbner basis of $\left\langle  f \right\rangle,$ and $d$ the maximal degree of an element of this basis. We then have the following regularity result:

\begin{prop} \label{prop differentiability}
The application $\Psi$ that maps an element of $U$ to the reduced Gröbner basis of the ideal it generates is a rational, smooth, function $U \rightarrow A_{\leq d}^r.$
\end{prop}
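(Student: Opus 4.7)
The plan is to realize $\Psi$ as a composition of rational operations whose denominators never vanish on $U$. Since any $K$-rational function is smooth wherever its denominator is non-zero, this is enough for both conclusions at once. I would start by running Algorithm \ref{algo wMF5} symbolically on a generic $(f_1,\dots,f_s) \in U$ up to degree $d$. Building $Mac_d(f_1,\dots,f_s)$ and its F5-pruned version $\mathscr{M}_{d,i}$ is $R$-linear in the input coefficients, and each elimination step $L_i \leftarrow L_i - (M_{i,j}/\text{pivot})\, L_1$ is a rational operation in the entries of $\mathscr{M}_{d,i}$, hence in the input. Because the leading-monomial structure is constant on $U$ (the locally constant property deduced from \textbf{H1} and \textbf{H2} in Section \ref{Continuity}), the combinatorial skeleton of the algorithm---which columns carry pivots, which rows survive the F5 criterion, which positions are completed from $\widetilde{\mathscr{M}_{d-1,i}}$---is the same for every element of $U$ and can be treated as a single closed formula rather than a branching program.

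Next I would verify that every denominator produced along the way is non-vanishing on $U$. By Proposition \ref{refined}, the product of the pivots chosen while row-reducing $\mathscr{M}_{d,i}$ up to column $l_{d,i}$ equals, up to a unit of $R$, a minor of $\mathscr{M}_{d,i}$ on its first $l_{d,i}$ columns; under \textbf{H1} and \textbf{H2} this minor keeps constant valuation, hence is non-zero throughout $U$, so each individual pivot is non-zero. Consequently the raw Gröbner basis output by Algorithm \ref{algo wMF5} has all its coefficients given by rational functions of the input coefficients, regular on the whole of $U$.

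Finally I would pass from this raw basis to the reduced one. Discarding elements whose leading monomial is a proper multiple of another leading monomial is a combinatorial choice that is constant on $U$; normalizing each survivor by dividing by its leading coefficient is rational with non-vanishing denominator; and inter-reducing---subtracting from each $g_i$ the appropriate multiples of the other basis elements to cancel every tail monomial that lies in $LM(I)$---amounts to solving a triangular linear system whose diagonal entries are the (non-zero) leading coefficients, hence is again rational and regular on $U$. Composing the three stages presents $\Psi$ as a rational function $U \to A_{\leq d}^r$ with denominator bounded away from zero on $U$, so $\Psi$ is both rational and smooth.

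The main obstacle, and the only place where \textbf{H1} and \textbf{H2} are really used, is the denominator control in the second step: ruling out that some pivot vanishes as $f$ varies in $U$. The whole argument hinges on the fact, established in Sections 2 and 3, that the neighborhood $U$ can be chosen so that the relevant minors of every Macaulay matrix retain their non-zero status; once that is in hand, rationality and smoothness both fall out of the standard principle that a $K$-rational function is analytic away from the zero locus of its denominator.
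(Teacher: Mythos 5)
The paper states Proposition \ref{prop differentiability} without any proof, so there is nothing to compare your argument against; on its own merits your proposal is essentially sound, and it is the natural argument: under \textbf{H1} and \textbf{H2} the set $LM(\langle f\rangle)\cap A_\delta$ for $\delta\leq d$ is constant on $U$, so the reduced Gr\"obner basis is obtained from the Macaulay matrices by rational formulas whose denominators are minors that do not vanish on $U$, and a rational map with non-vanishing denominator is smooth.

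One step deserves tightening. You claim the entire ``combinatorial skeleton'' of Algorithm \ref{algo wMF5} is constant on $U$, but the algorithm selects as pivot the entry of \emph{smallest valuation} in the current column, and which row achieves that minimum can change as $f$ varies within $U$; so the algorithm is genuinely a branching program, not a single closed formula, and the coefficient functions it produces are only piecewise rational a priori. This is repaired by not following the algorithm at all: the reduced Gr\"obner basis in degree $\delta$ is read off from the \emph{reduced} row echelon form of $Mac_\delta(f_1,\dots,f_s)$, which is unique, and on the locus where the pivot-column profile is constant (which $U$ is, by the locally constant initial ideal) each entry of the reduced row echelon form is a quotient of minors by Cramer's rule, with denominator the pivotal minor, non-zero on $U$. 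Your final inter-reduction step is then already absorbed into this canonical object. With that substitution --- or, alternatively, by fixing once and for all a deterministic pivot-row rule and noting that the reduced echelon form is independent of that choice --- your proof is complete.
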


We can now try to apply the idea of \citep{Caruso:2014:2} and its Lemma 3.4. The differential of $\Psi$ is then given by the following result:

\begin{theo} \label{Diff}
Let $f=(f_1,\dots,f_s) \in U.$ Let $g \in A_{\leq d}^r$ be the reduced Gröbner basis of $\left\langle  f \right\rangle,$ and $M \in A^{s \times r}$ be such that $g=f \times M.$
Then \[\Psi'(f) \cdot \delta f = \delta f \times M \mod g.\]
\end{theo}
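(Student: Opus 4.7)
The plan is to interpret $\Psi'(f) \cdot \delta f$ as the time derivative at $t = 0$ of $g(t) := \Psi(f + t\,\delta f)$, the reduced Gröbner basis along the line $f + t\,\delta f \in U$. By Proposition \ref{prop differentiability}, $\Psi$ is smooth on $U$, so $g(t)$ depends smoothly on $t$. The local constancy of the leading monomials under \textbf{H1} and \textbf{H2} (recalled at the start of Section \ref{Continuity}) guarantees that $g(t)$ keeps the same combinatorial shape as $g = g(0)$: each $g_i(t)$ is monic with a fixed leading monomial, and no other monomial appearing in any $g_i(t)$ is divisible by any $\mathrm{LM}(g_j)$.

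Next I would lift the equation $g = f \times M$ to a smooth identity $g(t) = (f + t\,\delta f) \times M(t)$ with $M(0) = M$. Such an $M(t)$ can be produced by running a division of $g(t)$ by $f + t\,\delta f$: the divisibility tests are decided by leading monomials, which stay constant in $t$, so the entries of the quotient matrix are rational in $t$ with no pole at $0$. Differentiating at $t = 0$ gives
\[
\Psi'(f) \cdot \delta f \;=\; \dot g(0) \;=\; \delta f \times M \;+\; f \times \dot M(0).
\]
Each component of $f \times \dot M(0)$ lies in $\langle f \rangle = \langle g \rangle$, hence reduces to $0$ modulo $g$. On the other hand, $\dot g(0)$ is already in normal form modulo $g$: the coefficient of $\mathrm{LM}(g_i)$ in $\dot g_i(0)$ is $0$ (it is the derivative of the identically-$1$ leading coefficient), and the remaining monomials of $\dot g_i(0)$ inherit the reducedness property of $g_i(t)$, so none of them is divisible by any $\mathrm{LM}(g_j)$. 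Consequently $\dot g(0)$ coincides with the normal form of $\delta f \times M$, i.e. $\Psi'(f) \cdot \delta f = \delta f \times M \bmod g$.

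The main obstacle is constructing the smooth lifting $M(t)$ with the prescribed initial value $M$, since a priori there are many matrices satisfying $g = f \times M$ and not every one of them need extend smoothly along the path. My plan for handling this is to show that $\delta f \times M \bmod g$ does not depend on the chosen $M$: if $M'$ is another valid lifting, then $M - M'$ is a syzygy of $f$, and applying the same smooth-extension argument to the syzygy module places $\delta f \times (M - M')$ inside $\langle g \rangle$, so the two reductions agree. This independence, together with the differentiation performed for \emph{some} smoothly extendable $M(t)$, then yields the theorem for an arbitrary choice of $M$.
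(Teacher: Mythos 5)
Your argument is essentially the paper's own proof made rigorous: the paper uses the ``method of physicists'' --- write $g+\delta g=(f+\delta f)\times(M+\delta M)$, keep first-order terms, discard $f\times\delta M$ modulo $g$ because it lies in $\langle f\rangle=\langle g\rangle$, and observe that $\delta g$ is already in normal form since the leading monomials are locally constant and $g$ is reduced. You make exactly these three moves, phrased as differentiation along $t\mapsto f+t\,\delta f$. You also supply two points the paper leaves implicit and which are genuinely needed: that the coefficient of $LM(g_i)$ in $\dot g_i(0)$ vanishes (derivative of the identically-$1$ leading coefficient), and that $\delta f\times M\bmod g$ is independent of the choice of $M$ --- necessary because the theorem allows $M$ to be any matrix with $g=f\times M$. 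Your justification of the latter is vague (``the same smooth-extension argument''), but the fact is true for a cleaner reason: under \textbf{H1} the columns of $M-M'$ are syzygies of a regular sequence, hence combinations of the Koszul relations $f_j e_i-f_i e_j$, so $\delta f\times(M-M')\in\langle f\rangle=\langle g\rangle$.

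One mechanism as written would fail: you cannot obtain $M(t)$ by dividing $g(t)$ by $f+t\,\delta f$. The tuple $f+t\,\delta f$ is in general not a Gröbner basis, so multivariate division of an ideal element by it need not terminate with zero remainder, and the ``divisibility tests decided by leading monomials'' you invoke are tests against $LM(f_i)$, not against $LM(g_j)$. The smooth lifting $M(t)$ does exist, but its correct source is the rationality underlying Proposition \ref{prop differentiability}: the extended Matrix-F5 computation expresses the rows of the reduced Macaulay matrices --- hence $g(t)$ together with a matrix $M(t)$ of coordinates --- by Cramer-type formulas whose denominators are pivots (minors) that do not vanish on a neighborhood of $f$. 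Substituting that observation for your division step closes the gap, and the independence-of-$M$ argument you already give then reconciles $M(0)$ with the arbitrarily prescribed $M$.
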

\begin{proof}
We use the "method of physicists." We have $g+\delta g=(f+\delta f) \times (M+\delta M).$
At the first order, $\delta g = \delta f \times M + f \times \delta M.$
If $\delta f$ is small enough, \textit{e.g.} $\delta f$ is such that $f+\delta f \in U,$ then $g$ and $g+\delta g$ have the same leading terms.
Since $g$ is a reduced Gröbner basis, it implies that none of the $LM(g_i)$'s divides any term of $\delta g.$
Therefore \[\delta g = \delta g \mod g.\]

Besides, $f \times \delta M \in  \left\langle  f \right\rangle.$ Thus, $f \times \delta M = 0 \mod g.$

As a consequence, $\delta g = \delta f \times M \mod g.$
\end{proof}

\begin{rmk}
Even though the computation of $\delta f \times M \mod g$ gives a rather convenient way to track the behavior of Gröbner basis computation arond $f,$ the surjectivity hypothesis in Lemma 3.4 of \citep{Caruso:2014:2} seems rather difficult to apprehend. 
\end{rmk}

\subsubsection{Illustration}

We provide here an example to understand theorem \ref{Diff} and its qualitative implications.
Let $f=(x,xy^2+y^3+z^3)$ in $\mathbb{Q}_p[x,y,z].$ Then a reduced Gröbner basis of $f$ regarding to the grevlex ordering is: $g=(x,y^3+z^3)$ with $M = \begin{bmatrix}
1 & -y^2 \\
0 & 1 \\
\end{bmatrix}.$

Now, let $\delta f = (O(p^5)x,O(p^5)xy^2+O(p^5)y^3+O(p^5)z^3.$
Then, an approximate Gröbner basis of $f+ \delta f$ is given by $(x,y^3+(1+O(p^5))z^3).$ Thus, $\delta g = (0,O(p^5)z^3).$

At the same time, $\delta f \times M = (O(p^5)x,O(p^5)xy^2+O(p^5)y^3+O(p^5)z^3).$ Therefore, $\delta f \times M \mod g = (0,O(p^5)z^3),$ and we do have $\delta f \times M \mod g = \delta g,$ even though the surjectivity hypothesis is not fulfilled.

\section{Implementation} \label{num exp}

\subsection{Direct computations}

A toy implementation in Sage \cite{Sage} of the previous algorithm is available at \url{http://perso.univ-rennes1.fr/tristan.vaccon/toy_F5.py}.

The purpose of this implementation was the study of the precision. It is therefore not optimized regarding to time-complexity.

We have experimented the weak-Matrix-F5 algorithm up to degree $D$ on homogeneous polynomials $f_1,\dots, f_s,$ of degree $d_1, \dots, d_s$ in $\mathbb{Z}_p[X_1,\dots, X_n],$  with coefficients taken randomly in $\mathbb{Z}_p$ up to initial precision $30.$ This experiment is repeated $n_{exp}$ times, and the monomial ordering was grevlex.
\textbf{max} denotes the maximal loss in precision noticed on a polynomial in all the $n_{exp}$ output bases, $\overline{m}$ the mean loss in precision over all coefficients of the output bases, \textbf{gap} is the maximum of the differences for one experiment between the effective maximal loss in precision and the theoretical bound  $prec_{MF5},$ and \textbf{f} is the number of failures.
We present the results in the following array:

\begin{tabular}{|c|c|c|c|c|c|c|c|}
\hline
 $d =$ & D & $p$ & $n_{exp}$ & \textbf{max} & $\overline{m}$ & \textbf{gap} & \textbf{f}  \\  \hline
[3,4,7]& 12 & 2 & 30 &11   & .5 & 141 & 0 \\ \hline
[3,4,7]& 12 & 7 & 30 & 2& 0 & 42 & 0 \\ \hline
[2,3,4,5] & 11 & 2&20 & 25 & 2.2 & 349 & 3 \\ \hline
[2,3,4,5] & 11 & 7&20 & 5 & .3 & 84 & 0 \\ \hline
[2,4,5,6] & 14 & 2&20 & 28 & 3.1 & 581 & 3 \\ \hline
[2,4,5,6] & 14 & 7&20 & 14 & .4 & 73 & 0 \\ \hline
\end{tabular} 
\\

These results suggest that the loss in precision is less when working with bigger primes. It seems reasonable since the loss in precision comes from pivots with positive valuation, while, with the Haar measure on $\mathbb{Z}_p,$ the probability that $val(x) = 0$ for $x \in \mathbb{Z}_p$ is $\frac{p-1}{p}.$ Similarly, it increases when the size of the Macaulay matrices increases.

Concerning the gap between $prec_{MF5}$ and the effective loss in precision, we remark that $prec_{MF5}$ derive from Theorem \ref{Prec-row} which was about dense matrices. A pivot with a big valuation echos into $prec_{MF5},$ even though it might generate no loss in precision if there is no non-zero coefficient on its column to eliminate. Hence, Theorem \ref{Prec-row} does not take into account the sparsity of the Macaulay matrices, which explains why \textbf{gap} might be so big compared to \textbf{max}.

\subsection{About stability}

Thanks to Theorem \ref{Diff}, there are now three ways to experiment the loss in precision of reduced Gröbner bases computation:
\begin{itemize}
\item \textbf{Direct computation}: in $\mathbb{Z}_p[X_1,\dots,X_n],$ through Algorithm \ref{algo wMF5}, starting from some finite precision $O(p^k).$
\item \textbf{Difference method}: Compute the reduced Gröbner bases $g^{(1)}$ of $f^{(1)} \in \mathbb{Z}[X_1,\dots,X_n]^s$ and $g^{(2)}$ of $f^{(2)}=f^{(1)}+\mathrm{d}f,$ with some $\mathrm{d}f \in (p^k \mathbb{Z}[X_1,\dots,X_n])^s,$ and look at the $p$-adic valuations of the coefficients of $g^{(1)}-g^{(2)};$
\item \textbf{Differential}: Compute the differential at $f^{(1)}$ as in Theorem \ref{Diff}.
\end{itemize}

We have tried to compare these three computations over the same context as in the previous Subsection (random homogeneous polynomials of given degrees with initial precision 30). The array in the column Min shows for each of the $n_{exp}$ experiments the minimum of the precision either on the output reduced Gröbner basis or on the estimation of the loss in precision through the difference and differential methods.  

\begin{tabular}{|c|c|c|c|c|c|}
\hline
 $d =$ & D & $p$ & $n_{exp}$ & method & Min  \\  \hline 
[2,2,3]& 5 & 2 & 10 &direct &[28, 24, 6, 15, 25, 27, 22, 22, 26, 22] \\ \hline
[2,2,3]& 5 & 2 & 10 &difference &[30, 25, 6, 18, 25, 30, 28, 24, 27, 22] \\ \hline
[2,2,3]& 5 & 2 & 10 &differential &[30, 25, 6, 18, 25, 30, 28, 24, 27, 22]\\ \hline \hline
[2,2,3]& 5 & 7 & 10 &direct &[29, 26, 28, 29, 29, 30, 26, 30, 28, 30] \\ \hline
[2,2,3]& 5 & 7 & 10 &difference &[30, 26, 29, 30, 29, 30, 27, 30, 28, 30] \\ \hline
[2,2,3]& 5 & 7 & 10 &differential &[30, 26, 29, 30, 29, 30, 26, 30, 28, 30] \\ \hline \hline
[2,3,4]& 7 & 2 & 10 &direct &[22, 26, 28, 27, 24, 23, 27, 18, 21, 22] \\ \hline
[2,3,4]& 7 & 2 & 10 &difference &[23, 26, 29, 28, 26, 29, 28, 20, 26, 22] \\ \hline
[2,3,4]& 7 & 2 & 10 &differential &[23, 26, 29, 28, 26, 29, 28, 20, 26, 22] \\ \hline \hline
[2,3,4]& 7 & 7 & 10 &direct &[30, 28, 28, 30, 28, 30, 28, 28, 26, 30] \\ \hline
[2,3,4]& 7 & 7 & 10 &difference &[30, 29, 29, 30, 28, 30, 28, 28, 26, 30] \\ \hline
[2,3,4]& 7 & 7 & 10 &differential &[30, 28, 28, 30, 28, 30, 28, 28, 26, 30] \\ \hline
\end{tabular} 
\\

From this array, we can infer the following heuristics:
\begin{itemize}
\item Even though the surjectivity hypothesis is not guaranteed, the differential is very close to the results of the difference method, hence to the intrinsic loss in precision;
\item Our direct computations are often stable, but as one could expect, may still occasionally lose more precision than the intrinsic loss in precision.
\end{itemize}

\section{Liftings under the \textbf{H2} condition}

\label{section lift}

\subsection{Lifting Gröbner bases}

In this Section, we address the following issue: in the Definition \ref{App GB} of approximate Gröbner bases, we state that for any specialization of the $O(\pi^{n})$ in the entry polynomials, there is a specialization of the $O(\pi^{n})$ in the output polynomials that stays consistent, namely that is a Gröbner bases of the ideal generated by the entry polynomials. Yet, knowing an approximate Gröbner bases up to some precision, if we give extra digits on the entry polynomials, is there a more clever way to compute an approximate Gröbner basis consistent with those extra-digits than performing a computation from start again?

A first natural idea would be the Hensel-Newton method. The usage of Hensel liftings to speed-up the computation of Gröbner basis have been proposed in \cite{Winkler:1988}, and continued or used in \cite{Pauer:1992}, \cite{Arnold:2003} or \cite{Renault:2006}. The general idea was that, given $f=(f_1,\dots,f_s) \in \mathbb{Z}[X_1,\dots,X_n],$ one would first compute the reduced Gröbner basis $\overline{g}=(\overline{g_1},\dots,\overline{g_r})$ in $\mathbb{Z}/p\mathbb{Z}[X_1,\dots, X_n]$ of the reduction modulo $p$ of $f,$ and then use Hensel liftings to obtain the reduced Gröbner basis $g=(g_1,\dots, g_r)$ in $\mathbb{Z}[X_1,\dots,X_n]$ of $\left\langle f \right\rangle.$
For this scheme to work, one often assume some "luckyness" hypotheses on $p,$ \textit{e.g.} that $\overline{g}$ is the reduction modulo $p$ of $g.$ This conditions are often very difficult to check in advance.
We show in this Section that under the hypotheses \textbf{H1} and \textbf{H2}, one can perform direct lifting of an approximate Gröbner basis (along with the coordinates of its polynomials in the basis of the entry polynomials). The "luckyness" hypothesis of $p$ is then replaced by the only additional requirement of enough entry precision.

\subsection{Lifting at points satisfying \textbf{H1} and \textbf{H2} }

\subsubsection{Presentation of the algorithm}

The idea of the Algorithm \ref{algo lift} is the following. We start from the initial homogeneous polynomials $F \in B^s$ and an approximate $D$-Gröbner basis $G \in B^r$ along with some matrix $M \in B^{s \times r}$ such that $G = (F+O(\pi^m)) \times M,$ and want to compute an approximate $D$-Gröbner basis for $F+O(\pi^l).$ If $F$ can be handled at infinite precision (\textit{e.g.} $F \in \mathbb{Q}[X_1,\dots,X_n]^s$), one can with the same algorithm obtain a $D$-Gröbner basis for $F.$ It corresponds to $l = +\infty$ in what follows. 

The idea is to lift canonically $M$ to some $\widehat{M}$ at precision $O(\pi^l)$ and compute $H=(F+O(\pi^l) \times \widehat{M}.$ The desired approximate $D$-Gröbner basis is then obtained by performing inter-reduction on $H$.

By canonical lifting of $M$ we mean adding zeros as $\pi$-digits up to $O(\pi^l)$ in the $\pi$-adic development of the coefficients of $M.$

\begin{rmk}
In Algorithm \ref{algo lift}, we use Algorithm \ref{algo wMF5} with the small modification that it computes an $M$ such that $G=F \times M$ along with the computation of $G.$ 
\end{rmk}   

\IncMargin{1em}
\begin{algorithm}[H] \label{algo lift}
\DontPrintSemicolon

\SetKwInOut{Input}{input}\SetKwInOut{Output}{output}
\SetKwComment{Comment}{}{}

\Input{$F=(f_1, \dots, f_s) \in R[X_1,\dots, X_n]^s$, homogeneous polynomials with respective degrees $d_1 \leq  \dots \leq  d_s$, and $D \in \mathbb{N}$, \\
 a term order $w$. \\
 The precision $m$ of the first computation and the precision $l$ to perform the lifting. \\}
\Output{$(g_1, \dots, g_k) \in A^k$, an approximate D-Gröbner basis of $Id(F)$, or error if $(f_1, \dots, f_s)$ does not satisfy \textbf{H1}, \textbf{H2} or the precision is not enough. }

\BlankLine

\Begin{
$G,M \leftarrow \textrm{weak-MF5}(F+O(\pi^m));$ \; 
\Comment*{We have $G=(F+O(\pi^m)) \cdot M$;}  \;
$\widehat{M} \leftarrow $ canonical lifting of $M$ to precision $l;$ \;
$H \leftarrow (F+O(\pi^l)) \cdot \widehat{M};$ \;
$\widehat{G} \leftarrow [];$ \;
\For{$i \in \llbracket 1, \sharp H $ }{
$\widehat{G} .\textrm{Append}(H[i] \mod \widehat{G});$ \;
}
Return $\widehat{G}.$\;
}
\caption{The weak-lifting algorithm}
\end{algorithm}
\DecMargin{1em}

\subsubsection{Correction}

We prove here that Algorithm \ref{algo lift} do compute a lifted approximate $D$-Gröbner basis.

Indeed, let $f=(f_1,\dots,f_s) \in B_{d_1}\times \dots B_{d_s} $ satisfying \textbf{H1} and \textbf{H2}. Let $n$ be a high-enough precision on $f.$ Let $G=(g_1,\dots,g_r) \in B^r$ be some approximate $D$-Gröbner basis of $\left\langle f+O(\pi^m) \right\rangle,$  with $M \in B^{s \times r}$ homogeneous polynomials, such that for some $l_1 >0,$ we have $g+O(\pi^{l_1})=(f+O(\pi^m)) \times (M+\pi^{m}).$ We assume that the $g_i$'s are sorted such that $\vert g_i \vert <  \vert g_{i+1} \vert,$ or $\vert g_i \vert =  \vert g_{i+1} \vert$ and $LM(g_i) > LM(g_{i+1}).$ The output of Algorithm \ref{algo wMF5} weak-MF5 satisfies this condition.

Let $l>m,$ or $l=+\infty$ in the special case that infinite precision can be given on $f.$ 
Let $\widehat{M}$ be the trivial lifting of $M+O(\pi^{m})$ to $O(\pi^l),$ \textit{i.e.} $\widehat{M}=M+O(\pi^{m})$ and the $\pi$-digits of $\widehat{M}$ between $\pi^{m}$ and $\pi^l$ are zeros. The following Lemma proves the correction of Algorithm \ref{algo lift}:

\begin{lem} \label{lem lift}
Let $H=(f+O(\pi^m)) \times (\widetilde{M}+O(\pi^{m}).$ Then, for $m$ and $l_1$ big enough, $\widehat{G}$ given by the successive reduction of the $h_i$'s by the $(\widehat{g}_1,\dots,\widehat{g}_{i-1})$ is an approximate $D$-Gröbner basis of  $\left\langle f+O(\pi^m) \right\rangle.$
If $G$ is the output of Algorithm \ref{algo wMF5} then $n>2 prec_{MF5}(f)$ is enough and $l_1=n-prec_{MF5}(f).$
If $G$ is a reduced Gröbner basis up to renormalisation of its leading coefficients, then so is $\widehat{G}.$
\end{lem}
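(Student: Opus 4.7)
The plan is to show that $H$ is close to $G$ coefficient-by-coefficient, then propagate the resulting precision through the successive reduction loop. Since the canonical lifting satisfies $\widehat{M} \equiv M \pmod{\pi^m}$ by construction, one obtains
\[ H \;=\; (f+O(\pi^l))\cdot \widehat{M} \;\equiv\; (f+O(\pi^m))\cdot M \;=\; G \pmod{\pi^{l_1}}, \]
coefficient-by-coefficient, where $l_1$ accounts for the valuations introduced by the product. In particular, as soon as $l_1$ exceeds $\mathrm{val}(LC(g_i))$ for every $i$, the leading monomials satisfy $LM(h_i)=LM(g_i)$, matching the combinatorial structure of the original Gröbner basis.

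Next, I would proceed by induction on $i$, in the order imposed on $G$, to show that $\widehat{g}_i := h_i \mod (\widehat{g}_1,\dots,\widehat{g}_{i-1})$ is well-defined, has $LM(\widehat{g}_i)=LM(g_i)$, lies in $\langle f+O(\pi^l)\rangle$, and satisfies $\widehat{g}_i \equiv g_i \pmod{\pi^{l_1}}$. The sorting of the $g_j$'s guarantees that the relevant reducers are available at each step. Crucially, the leading coefficients of the $\widehat{g}_j$'s coincide with those of the $g_j$'s up to $O(\pi^{l_1})$, and are units after normalisation; Proposition \ref{piv} then shows that every reduction step preserves precision. Combined with the reducedness assumption on $G$---so that $g_i \mod (g_1,\dots,g_{i-1}) = g_i$---this closes the induction.

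From this, $\widehat{G}$ inherits the leading-monomial structure of $G$ and lies in $\langle f+O(\pi^l)\rangle$, so Lazard's criterion (Theorem \ref{echelon}) yields that $\widehat{G}$ is an approximate $D$-Gröbner basis of that ideal. The reduced property follows directly from the explicit inter-reduction. For the precision bound, Theorem \ref{thm wMF5} guarantees that weak-MF5 started at input precision $n$ returns $G$ at precision $n - prec_{MF5}(f)$, giving $l_1 = n - prec_{MF5}(f)$. Since the leading coefficients of the $g_i$'s arise as pivots of the Macaulay matrix reductions, $\mathrm{val}(LC(g_i)) \leq prec_{MF5}(f)$, so the requirement $l_1 > prec_{MF5}(f)$ becomes exactly $n > 2\,prec_{MF5}(f)$.

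The main obstacle I foresee is the cumulative precision bookkeeping through the reduction loop: one must verify that no step introduces a valuation drop in the leading coefficient of some $\widehat{g}_j$ that would spoil the precision analysis of Proposition \ref{piv}, and that the ambient $l$-precision is not eroded below $l_1$ through the successive modular reductions. The hypotheses \textbf{H1} and \textbf{H2}, via the uniqueness and stability of leading monomials they guarantee, are what rule out any pivoting ambiguity and keep the reduction stable.
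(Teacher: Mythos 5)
Your overall strategy --- write $H = G + O(\pi^{l_1})$, then induct on $i$ to show that the successive reductions preserve the leading monomials of $G$ --- is the same as the paper's, but the central step has a gap. You assert that $LM(h_i)=LM(g_i)$ as soon as $l_1$ exceeds $\mathrm{val}(LC(g_i))$; this is false before reduction, because $h_i = g_i + O(\pi^{l_1})$ may carry small nonzero coefficients on monomials of degree $\vert g_i\vert$ that are \emph{larger} than $LM(g_i)$, so a priori $LM(h_i) > LM(g_i)$. The entire point of the successive reduction, and the place where \textbf{H2} does its work, is to kill exactly these terms: since $LM(g_i)$ is a leading monomial of the reduced Gr\"obner basis, the weakly-$w$ condition forces every monomial $x^\alpha > LM(g_i)$ of the same degree to lie in $LM(I)$, hence (as it cannot be a minimal generator at that position, given the sorting) to be divisible by some $LM(\widehat{g}_j)$ with $j<i$, which equals $LM(g_j)$ by the induction hypothesis. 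You attribute the availability of reducers to ``the sorting of the $g_j$'s'', which alone is not sufficient: without \textbf{H2} there could be a monomial above $LM(g_i)$ outside $LM(I)$, its $O(\pi^{l_1})$ coefficient in $h_i$ would be irreducible by the earlier $\widehat{g}_j$'s, and $LM(\widehat{g}_i)$ would be ambiguous. You do invoke \textbf{H1} and \textbf{H2} in your closing paragraph, but as a general appeal to ``stability'' rather than as this specific divisibility argument, which is the crux of the lemma.

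A secondary imprecision: $\widehat{g}_i \equiv g_i \pmod{\pi^{l_1}}$ is too strong. Each elimination of a high term divides by some $LC(\widehat{g}_j)$, which may have positive valuation, so the congruence on the coefficient of $LM(g_i)$ only holds modulo $\pi^{l_1-c}$ for some $c\ge 0$; the paper bounds this loss by $c\le prec_{MF5}(f)$, which is exactly what turns ``$l_1$ big enough'' into $n > 2\, prec_{MF5}(f)$. Your final paragraph reaches the same numerical condition by essentially the same bookkeeping, so this part is recoverable once the congruence is weakened appropriately.
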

\begin{proof}
We first remark that $H=G +O(\pi^{l_1}),$ and $H$ is made of homogeneous polynomials because so are $\widehat{M}$ and $F.$ 
We then prove by induction that for all $i,$ $LM(\widehat{g_i})=LM(g_i).$

Because of the $H2$ hypothesis, $LM(g_1)$ is the biggest monomial of degree $\vert g_1 \vert.$ Since $h_1=\widehat{g_1}$ is homogeneous of the same degree as $g_1$ and  $\widehat{g_1}=g_1 +O(\pi^{l_1}),$ $LM(\widehat{g_1})=LM(g_1).$
Let us assume that for some $i>0,$ we have that for any $1\leq j <i,$ $LM(\widehat{g_j})=LM(g_j).$ Let $x^\beta = LM(g_i).$
Then, the $\textbf{H2}$ hypothesis means that for any $x^\alpha >x^\beta$ and of degree $\vert g_i \vert,$ there is some $LM(\widehat{g}_j)$ with $j<i$ that divides it.
Therefore, $LM(\widehat{g}_i) \leq LM(g_i).$
Nevertheless, the coefficients in $x^\alpha$ of $h_i,$ with $x^\alpha >x^\beta$ and of degree $\vert g_i \vert,$ belong to $\pi^{l_1}R.$ As a consequence, there exists some $c \geq 0$ such that after the reduction of $h_i$ by the $\widehat{g_j},$ the coefficients in $x^\beta$ of $g_i$ and $\widehat{g_i}$ are equal modulo $\pi^{l_1-c}.$ Hence, if $l_1$ is big enough, it implies that  this coefficient is non-zero and then $LM(\widehat{g}_i) = LM(g_i).$

In the special case that $G$ is the output of Algorithm \ref{algo wMF5} and $n>2 prec_{MF5}(f),$ $l_1=m-prec_{MF5}(f),$  then $c \leq prec_{MF5}$ and $l_1$ is indeed big enough.
The result is then proved.

Regarding reduced Gröbner bases, with $LM(\widehat{g}_i) = LM(g_i)$ and the definition of $LM(\widehat{g}_i),$ the result is clear.
\end{proof}

Heuristically, $n$ and $l_1$ are big enough when $G$ can take a "second" row-reduction to certify its leading monomials. This is why when we first apply Algorithm \ref{algo wMF5}, $n > 2 prec_{MF5}$ is enough. 

\subsubsection{About complexity}

The complexity of Algorithm \ref{algo lift} is the following. We can remark that it implies that the cost of the linear algebra is completely borne by the arithmetic at the initial finite precision, $m$.

\begin{prop} \label{prop complexite lift}
The complexity of Algorithm \ref{algo lift} is in $O \left( s^2 D \binom{n+D-1}{D}^3 \right)$ operations in $K$ at precision $m$ and $O \left( (s+\sharp G ) \binom{n+D-1}{D}^2 \right)$ operations at precision $l.$
\end{prop}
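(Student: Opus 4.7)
I would decompose the execution of Algorithm~\ref{algo lift} into three phases and charge each one to the appropriate precision level; the two terms in the stated bound will correspond to two disjoint groups of these phases.

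\emph{Phase one (weak-MF5 call, at precision $m$).} The only adjustment to Algorithm~\ref{algo wMF5} is that it must also return a transition matrix $M$ with $G=(F+O(\pi^m))\cdot M$. I would implement this by augmenting each Macaulay matrix $\mathscr{M}_{d,i}$ with an $s$-column identity block on the right, keeping track of how each row is a combination of the original $f_j$'s, and mirroring every row operation of the echelonization onto the augmentation. The asymptotic cost is unchanged, so Theorem~\ref{thm wM} supplies the first term $O\left(s^2 D\binom{n+D-1}{D}^3\right)$ operations at precision $m$.

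\emph{Phase two (matrix-vector product, at precision $l$).} The canonical lifting $\widehat M$ costs no coefficient operation. The product $H=(F+O(\pi^l))\cdot \widehat M$ has $\sharp G$ homogeneous entries $h_i = \sum_{j=1}^s f_j\,\widehat M_{j,i}$ of degrees $|g_i|\leq D$. I would exploit the decomposition $F+O(\pi^l) = (F+O(\pi^m)) + \pi^m F'$ where $F'$ records the freshly-added $\pi$-digits, so that $H = G + \pi^m\,(F'\cdot \widehat M)$: only an additive correction is genuinely formed at precision $l$, and the heavy polynomial products inside $F'\cdot \widehat M$ take place at the strictly lower precision $l-m$ and absorb into Phase one's bound. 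Bounding the remaining precision-$l$ work by $O\left(\binom{n+D-1}{D}^2\right)$ per input polynomial of $F$ gives an $O\left(s\binom{n+D-1}{D}^2\right)$ contribution.

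\emph{Phase three (self-reduction, at precision $l$).} Each $h_i$ is reduced against the previously processed $\widehat g_k$, $k<i$. A standard polynomial reduction in degree $\leq D$ uses at most $\binom{n+D-1}{D}$ elementary reduction steps, each touching $\binom{n+D-1}{D}$ coefficients, so the cost is $O\left(\binom{n+D-1}{D}^2\right)$ per $i$ and $O\left(\sharp G\,\binom{n+D-1}{D}^2\right)$ in total. Adding the contributions of Phases two and three produces the second term $O\left((s+\sharp G)\binom{n+D-1}{D}^2\right)$ at precision $l$.

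The main obstacle is precisely the bookkeeping in Phase two: a naive count of the $s\cdot\sharp G$ pairwise polynomial multiplications needed to form $H$ would yield a worse $O(s\,\sharp G\,\binom{n+D-1}{D}^2)$ bound. Recovering the announced $(s+\sharp G)$ prefactor requires arguing that, thanks to the identity $H = G + \pi^m F'\cdot \widehat M$, most of the precision-$l$ arithmetic collapses to additions involving already-computed quantities, and that the bulk of the product can be charged either to Phase one (what was already paid for inside $G$) or to arithmetic performed at precision below $l$.
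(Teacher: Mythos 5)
Your Phases one and three coincide with the paper's proof: the transition matrix $M$ is obtained exactly as you describe, by augmenting the Macaulay matrices with labels recording each row as a combination of the $f_j$'s and mirroring the row operations, which multiplies the cost of each row operation by $s$ and gives the $O\bigl(s^2D\binom{n+D-1}{D}^3\bigr)$ term at precision $m$; and the inter-reduction is charged $O\bigl(\binom{n+D-1}{D}^2\bigr)$ per element of $H$, giving $O\bigl(\sharp G\,\binom{n+D-1}{D}^2\bigr)$ at precision $l$. For Phase two the paper does not use any decomposition: it directly asserts that forming $H=(F+O(\pi^l))\cdot\widehat M$ costs $O\bigl(s\binom{n+D-1}{D}^2\bigr)$ operations at precision $l$, i.e.\ it charges $\binom{n+D-1}{D}^2$ per polynomial product $f_j\widehat M_{j,i}$ and counts $s$ of them. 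You are right to flag that a pairwise count gives $s\cdot\sharp G$ such products, but your proposed repair does not work.

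The flaw is that the identity $H=G+\pi^m\,(F'\cdot\widehat M)$ cannot be exploited as you claim, because $G$ is \emph{not} available at precision $l$: the row reduction loses up to $prec_{MF5}$ digits, so the computed representative of $G$ agrees with the exact polynomial $F^{(m)}\cdot\widehat M$ only modulo $\pi^{l_1}$ with $l_1\approx m-prec_{MF5}<l$. The digits of $H$ between $\pi^{l_1}$ and $\pi^{l}$ therefore cannot be read off from $G$ plus a correction; the product $F^{(m)}\cdot\widehat M$ must be recomputed in full at precision $l$, which is precisely the cost you were trying to avoid. Moreover, even the correction $F'\cdot\widehat M$ consists of the same $s\cdot\sharp G$ polynomial multiplications (merely at precision $l-m$), so nothing ``collapses to additions involving already-computed quantities.'' To match the statement you should either argue the product term the way the paper does, or accept the bound $O\bigl(s\,\sharp G\,\binom{n+D-1}{D}^2\bigr)$ for forming $H$ and observe separately how it compares to the other terms; the differencing trick does not bridge the gap.
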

\begin{proof}
The computation of $M$ along that of $g$ adds a factor $s$ to the asymptotic complexity of the computation of a $D$-Gröbner basis, namely, it accounts to a total complexity in $O \left( s^2 D \binom{n+D-1}{D}^3 \right)$ operations in $K$ at the initial precision $m.$ Indeed, it is enough to extend the Macaulay matrices with labels expressing the row polynomials in the basis of the initial polynomials and reflect the operations on the rows on these labels. This adds a factor $s$ to the computation of operations on the rows.  
The complexity of the computation of $H$ is then in $O \left( s \binom{n+D-1}{D}^2 \right)$ operations at precision $l,$ and the reductions to compute $\widehat{G}$ is in $O \left( \sharp G  \binom{n+D-1}{D}^2 \right)$operations at precision $l.$
\end{proof}

\subsection{Application}

We remark that in the Lemma \ref{lem lift}, one can lift the entries to precision $+\infty.$ It implies that if the input polynomials $f=(f_1,\dots,f_s)$ have coefficients in $\mathbb{Q},$ it is then possible to first compute an approximate Gröbner basis of $f$ to some high-enough precision, and then compute a Gröbner basis of $f$ as in Lemma \ref{lem lift}. Thanks to the fact that linear algebra is performed at finite precision, it amounts to a total  complexity that can be seen as an intermediate between that of computation of Gröbner bases over finite fields and direct computation over $\mathbb{Q}.$

This result can then be an answer to the issue with the difficulty of finding "lucky" primes. Taking $p$ of "medium size," like $7$ and working at a moderate precision, like $11$, would have been enough to compute and lift Gröbner bases in all the 20 cases tested in the case $d=[2,3,4,5]$ in Section \ref{num exp}. A precision $30$ would have been enough in the 20 cases tested of $d=[2,4,5,6],$ and in most cases, fewer precision was needed. 
One can also chose to take $p$ of bigger size, which allows to work at smaller precision. 

Moreover, because of the continuity result in Proposition \ref{prop differentiability}, while being cautious with the precision, the call to the weak-MF5 algorithm in Algorithm \ref{algo lift} could be replaced by any computation of Gröbner bases (\textit{e.g.} Faugère's F4 or F5), to obtain a better complexity for the computation of the first approximate Gröbner basis.

Finally, we illustrate our strategy with a direct example. Let $f=(10x,25xy^2+y^3+z^3)$ in $\mathbb{Q}[x,y,z].$ We first work with $\widetilde{f}=(10+O(5^4))x,(25+O(5^4))xy^2+(1+O(5^4))y^3+(1+O(5^4))z^3)$ in $\mathbb{Q}_5[x,y,z]$ at initial precision $4.$
Then Algorithm \ref{algo wMF5} provides, after elimination of rows with non-minimal leading monomial, the following approximate Gröbner basis: $G=((10+O(5^4))x,(1+O(5^3))y^3+(1+O(5^3))z^3),$ along with $M = \begin{bmatrix}
1 & -(3*5 + 2*5^2+O(5^3) )y^2 \\
0 & 1 \\
\end{bmatrix}.$

This leads to $\widehat{M} = \begin{bmatrix}
1 & -65y^2 \\
0 & 1 \\
\end{bmatrix},$ which, along with $f,$ gives $H=(10x,-5^4 xy^2+y^3+z^3).$ Inter-reduction then provides the minimal (and reduced up to the leading coefficients)   $\widehat{G} = (10x,y^3+z^3).$

\section{The affine case}

It is possible to extend Theorem \ref{thm wMF5} to non-homogeneous entry polynomials in a fashion similar to \cite{Faugere:2013} or \cite{Faugere:2014} when the monomial ordering $w$ refines the total-degree order.

\begin{prop} \label{thm wMF5 affine}
Let $(f_1,\dots,f_s) \in K[X_1,\dots, X_n]^s$ be polynomials. Let $f_1^h,\dots,f_s^h $ be their homogeneous components of highest degree. We assume that  $(f_1^h,\dots,f_s^h )$ satisfies \textbf{H1} and \textbf{H2}. Let $(f_1',\dots,f_s')$ be approximations of the $f_i$'s with precision $m$ on the coefficients.
Then, if $m$ is large enough, an approximate $D$-Gröbner basis of $(f_1',\dots,f_s')$ regarding to $w$ is well-defined. 

Moreover, if the $f_i$'s are in $R[X_1,\dots,X_n],$ then $m \geq prec_{MF5}(f_1^h,\dots,f_s^h )$ is enough, and the loss in precision is upper-bounded by $prec_{MF5}(f_1^h,\dots,f_s^h )$ or $prec_{Mac}(f_1^h,\dots,f_s^h )$ 
\end{prop}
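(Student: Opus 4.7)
The plan is to reduce the affine case to the homogeneous one by exploiting the assumption that $w$ refines the total degree order. Since for every polynomial $p$ in $K[X_1,\dots,X_n]$ one has $LM(p) = LM(p^h)$ in that case, the combinatorial data (leading monomials, F5 criterion, indices of the pivot columns in each degree) governing the Matrix-F5 computation on $(f_1,\dots,f_s)$ is determined by the homogeneous components $(f_1^h,\dots,f_s^h)$, for which we already have Theorem \ref{thm wMF5}.

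More concretely, I would introduce the affine Macaulay matrix $Mac_{\leq d}(f_1,\dots,f_s)$ whose rows are the $x^\alpha f_i$ with $\vert\alpha\vert + d_i \leq d$, written in the basis of monomials of $A_{\leq d}$ ordered first by descending total degree and then by $w$ within each degree block. With this ordering, the matrix has a block upper-triangular shape: the diagonal block of degree $e\leq d$ is (up to renaming rows) $Mac_e(f_1^h,\dots,f_s^h)$, augmented by contributions of lower-degree components in the strictly upper-triangular part. The leading coefficient of each row $x^\alpha f_i$ lies in its top-degree block and coincides with the leading coefficient of $x^\alpha f_i^h$. Hence an F5-style row-reduction of $Mac_{\leq d}(f_1,\dots,f_s)$ that proceeds columns from left (highest degree) to right is equivalent, on the top block, to the row-reduction of $Mac_d(f_1^h,\dots,f_s^h)$; the elimination then propagates to the lower-degree columns of the same row without requiring any new pivots.

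From this observation, the adaptation of the weak-MF5 algorithm and its correctness follow exactly as in Proposition \ref{correctness}: the F5 criterion may be invoked in terms of $LM(\langle f_1^h,\dots,f_{i-1}^h\rangle)$, the weakly-$w$ property of each $\langle f_1^h,\dots,f_i^h\rangle$ still forces every column to the right of a non-pivot column to be controlled by multiples of previously found leading monomials, and the completion step can be carried out with multiples of rows computed in lower degree. Since the pivots actually used in the row-echelon process are the pivots of the homogeneous blocks, Proposition \ref{refined} applied block by block shows that the precision loss is upper bounded by $\max_{e\leq d,\,i}\,val(\Delta_{e,i})$ computed for $(f_1^h,\dots,f_s^h)$, i.e.\ by $prec_{MF5}(f_1^h,\dots,f_s^h)$ (respectively $prec_{Mac}(f_1^h,\dots,f_s^h)$ for the precision-oriented variant of Theorem \ref{thm wM}).

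The main obstacle, and what a careful writeup has to check, is that the interaction between blocks does not generate extra precision loss or ambiguity in the leading monomials: when eliminating a coefficient in a top-degree block, the same multiplier acts on the strictly-lower-degree entries of the row, so one must verify that this does not produce a spurious pivot in a later (lower-degree) column before the block structure is processed. Because $w$ refines total degree and the homogeneous diagonal blocks are all injective (by \textbf{H1}) and weakly-$w$ (by \textbf{H2}), every lower-degree column is eventually pivoted by a row already produced in lower degree, and the completion procedure of Algorithm \ref{algo wMF5} succeeds for the same reason as in the proof of Proposition \ref{correctness}. Once this is in place, the conclusions on approximate $D$-Gröbner bases, sufficient precision $m\geq prec_{MF5}(f_1^h,\dots,f_s^h)$, and loss in precision transfer verbatim from Theorem \ref{thm wMF5}.
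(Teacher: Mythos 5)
Your proof is correct in substance, but it takes a more algorithmic route than the paper. The paper's own proof works at the level of ideals: it invokes (as in Proposition 13 of the cited work of Faug\`ere et al.) that $LM(\langle f_1^h,\dots,f_s^h\rangle)=LM(\langle f_1,\dots,f_s\rangle)$ when $w$ refines total degree, takes a Gr\"obner basis $h_i=\sum_j a_{i,j}f_j^h$ of the homogeneous ideal with homogeneous cofactors $a_{i,j}$, and observes that $g_i=\sum_j a_{i,j}f_j$ is then a Gr\"obner basis of the affine ideal; the whole computation, and hence its precision behaviour, is thereby ``completely determined'' by the homogeneous one, and Theorems \ref{thm wMF5} and \ref{thm wM} apply verbatim. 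You instead build an affine Macaulay matrix $Mac_{\leq d}$ and argue on its block-triangular structure that the elimination is governed by the homogeneous diagonal blocks $Mac_e(f_1^h,\dots,f_s^h)$. The two arguments rest on the same two facts ($LM(p)=LM(p^h)$ because $w$ refines degree, and \textbf{H1} preventing any top-block reduction to zero, i.e.\ any degree fall), and your cofactor-free elimination is precisely what applying the paper's $a_{i,j}$ to the affine $f_j$ amounts to. What the paper's formulation buys is that it sidesteps entirely the block-interaction worry you rightly flag (spurious pivots appearing in lower-degree columns): no elimination is ever performed on the affine matrix, so the question does not arise. What your formulation buys is an explicit description of the algorithm one actually runs on affine input and a direct identification of the pivots (hence of the precision loss) with those of the homogeneous blocks via Proposition \ref{refined}. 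Your resolution of the block-interaction point is a little quick --- the cleanest way to close it is to note that for determining leading monomials only the top-degree block of each row matters (since $w$ refines degree and the top block never vanishes under \textbf{H1}), so the lower-degree columns never need to be pivoted at that stage --- but with that remark made precise, your argument is complete.
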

\begin{proof}
As in proposition 13 of \cite{Faugere:2014}, $LM( \left\langle f_1^h,\dots,f_s^h \right\rangle)=LM( \left\langle f_1,\dots,f_s \right\rangle).$
Let $(h_1,\dots,h_r) \in K[X_1,\dots, X_n]^r$ be a Gröbner basis of $\left\langle f_1^h,\dots,f_s^h \right\rangle,$ made of homogeneous polynomials $h_i = \sum_{j} a_{i,j} f_j^h,$ for some $a_{i,j} \in K[X_1,\dots, X_n],$ homogeneous, and $r\geq 0.$ 
Let $(g_1,\dots,g_r) \in K[X_1,\dots, X_n]^r$ be such that $g_i=\sum_{j} a_{i,j} f_j.$ 
Then as a consequence, $(g_1,\dots,g_r)$ is a Gröbner basis of $ \left\langle f_1,\dots,f_s \right\rangle.$

It therefore means that the computation of a Gröbner basis of $ \left\langle f_1,\dots,f_s \right\rangle$ can be completely determined by that of $\left\langle f_1^h,\dots,f_s^h \right\rangle.$ It is then clear that it suffice to apply Theorem \ref{thm wMF5} or Theorem \ref{thm wM} to $(f_1^h,\dots,f_s^h).$
\end{proof}

\begin{rmk}
One can define an affine Moreno-Socias conjecture : If $k$ is an infinite field, $s \in \mathbb{N}$, $d_1,\dots, d_s \in \mathbb{N}$, then there is a non-empty Zariski-open subset $U$ in $A_{ \leq d_1} \times \dots \times A_{ \leq d_s}$ such that for all $(f_1,\dots, f_s) \in U$, $I=(f_1,\dots, f_s)$ is a weakly-grevlex ideal.
The affine and non-affine conjecture are clearly equivalent. As a consequence, Moreno-Socias conjecture would imply that sequences $(f_1,\dots, f_s) \in A_{ \leq d_1} \times \dots \times A_{ \leq d_s}$ satisfying \textbf{H1} and \textbf{H2} are generic.
\end{rmk}

\bibliographystyle{alpha}
\bibliography{biblio}

\end{document}